\newtheorem{theorem}{Theorem}
\newtheorem{lemma}{Lemma}
\newtheorem{definition}{Definition}
\newtheorem{remark}{Remark}
\newtheorem{claim}{Claim}
\newcommand{\size}[1]{\ensuremath{|#1|}}
\newcommand{\ceil}[1]{\ensuremath{\lceil#1\rceil}}
\newcommand{\Ceil}[1]{\ensuremath{\left\lceil#1\right\rceil}}
\newcommand{\lrA}[1]{\ensuremath{\left(#1\right)}}
\newcommand{\lrB}[1]{\ensuremath{\left[#1\right]}}
\newcommand{\lrC}[1]{\ensuremath{\left\{#1\right\}}}
\def\OPT{\mbox{OPT}}
\def\MST{\mbox{MST}}
\def\T{\mathcal{T}}
\def\OPT{\mbox{OPT}}
\def\MST{\mbox{MST}}
\def\capacity{k}
\title{An Improved Approximation Algorithm for the Capacitated Arc Routing Problem}
\author
{
Jingyang Zhao\\
University of Electronic Science and Technology of China\\
\texttt{jingyangzhao1020@gmail.com}
\and
Mingyu Xiao\footnote{Corresponding author}\\
University of Electronic Science and Technology of China\\
\texttt{myxiao@uestc.edu.cn}
}
\date{}
\begin{document}

\maketitle

\begin{abstract}
The Capacitated Arc Routing Problem (CARP), introduced by Golden and Wong in 1981, is an important arc routing problem in Operations Research, which generalizes the famous Capacitated Vehicle Routing Problem (CVRP).
When every customer has a unit demand, the best known approximation ratio for CARP, given by Jansen in 1993, remains $\frac{5}{2}-\frac{1.5}{k}$, where $k$ denotes the vehicle capacity.
Based on recent progress in approximating CVRP, we improve this result by proposing a $(\frac{5}{2}-\Theta(\frac{1}{\sqrt{k}}))$-approximation algorithm, which to the best of our knowledge constitutes the first improvement over Jansen's bound.
 
\end{abstract}

\maketitle

\section{Introduction}
The \textsc{Capacitated Arc Routing Problem} (CARP), introduced by ~\citet{GoldenW81}, is one of the most famous arc routing problems~\citep{van2015chapter}. In this problem, we are given an undirected graph $G = (V\cup\{v_0\}, E)$, where each edge $e \in E$ has an associated nonnegative cost $c(e) \in \mathbb{R}_{\geq 0}$ and a nonnegative integer demand $d(e) \in \mathbb{Z}_{\geq 0}$.
Edges with positive demand are referred to as \emph{customers}.
A fleet of identical vehicles, each with capacity $\capacity \in \mathbb{Z}_{\geq 1}$, is initially located at a designated depot vertex $v_0 \in V$.
The objective is to compute a set of routes (walks), each starting and ending at the depot, such that all customers are served, the total demand served by each vehicle does not exceed its capacity, and the total cost of the routes is minimized.
It is typically assumed that each customer is served by exactly one vehicle~\citep{jansen1993bounds,wohlk2008decade}.
When each edge's demand is either 0 or 1, the problem is referred to as the \emph{equal-demand CARP}; otherwise, it is known as the \emph{general CARP}.

CARP generalizes many famous routing problems, e.g., the \textsc{Chinese Postman Problem} (CPP)~\citep{eiselt1995arc1}, the \textsc{Rural Postman Problem} (RPP)~\citep{eiselt1995arc}, the \textsc{Traveling Salesman Problem} (TSP)~\citep{christofides1976worst}, and so on. Moreover, if the demands are defined for the vertices instead of the edges, CARP reduces to the \textsc{Capacitated Vehicle Routing Problem} (CVRP)~\citep{dantzig1959truck}, which is a representative problem in the area of node routing. Since CARP has various application in road networks, e.g., snow plowing~\citep{perrier2007survey}, waste collection~\citep{fernandez2016collaboration}, newspaper delivery~\citep{corberan2015arc}, and so on, it has been extensively studied in the areas of Operations Research and Computer Science~\citep{corberan2010recent,mourao2017updated}. A recent survey could be found in~\citep{corberan2021arc}.

We focus on approximation algorithms for CARP. 
For any minimization problem, an algorithm is called a $\rho$-approximation algorithm if it computes a solution with an objective value not exceeding $\rho$ times the optimal value in polynomial time, where $\rho\geq 1$ is called the \emph{approximation ratio}.

\subsection{Related work}
It is well known that CARP can be solved in polynomial time when $k=1$ or $k=2$. 
However, since CARP generalizes CVRP, it is APX-hard for any $\capacity \geq 3$~\citep{AsanoKTT97}. Moreover, as general CARP also generalizes the \textsc{Bin Packing Problem}~\citep{jansen1993bounds}, it is NP-hard to approximate within a factor better than $\frac{3}{2}$.
When $\capacity = \infty$, CARP reduces to the RPP, which in turn generalizes the TSP. Hence, both RPP and TSP remain APX-hard~\citep{karpinski2015new}.
However, if the subgraph induced by the edges with positive demand is connected, then RPP reduces to the CPP, which is solvable in polynomial time~\citep{eiselt1995arc1}.

CARP and its related problems has been extensively studied in both theory~\citep{jansen1993bounds,wohlk2008approximation,van2014constant,van2017parameterized,van2020approximate} and practice~\citep{brandao2008deterministic,santos2009improved,mourao2009heuristic,wohlk2018fast}.
For heuristic algorithms, a wide range of methods have been developed, including combinatorial methods~\citep{wohlk2008approximation}, memetic search~\citep{tang2009memetic}, tabu search~\citep{brandao2008deterministic}, metaheuristic~\citep{chen2016hybrid}, simulated annealing~\citep{babaee2016solving}, integer
programming~\citep{belenguer2003cutting}, and so on. More results can be found be in~\citep{corberan2010recent,mourao2017updated,corberan2021arc}.
For approximation algorithms, we first review the results on CVRP. 

For the equal-demand CVRP, where each customer has unit demand, \citet{HaimovichK85} proposed an $(\alpha + 1 - \frac{\alpha}{k})$-approximation algorithm under the assumption that the number of customers is divisible by $k$. Here, $\alpha$ denotes the approximation ratio of the metric TSP, for which it is known that $\alpha = \frac{3}{2}$~\citep{christofides1976worst} and was later slightly improved to $\alpha = \frac{3}{2} - 10^{-36}$~\citep{KarlinKG21, DBLP:conf/ipco/KarlinKG23}.
Subsequently, \citet{altinkemer1990heuristics} removed the divisibility assumption while maintaining the same approximation ratio of $\alpha + 1 - \frac{\alpha}{k}$. For the general CVRP, \citet{altinkemer1987heuristics} proposed an $(\alpha + 2 - \frac{2\alpha}{k})$-approximation algorithm when $k$ is even.
Note that equal-demand (resp., general) CVRP is also referred to as the \emph{unit-demand} (resp., \emph{unsplittable}) CVRP. Another common variant is the \emph{splittable} CVRP, where a customer may be severed by multi-vehicles. 
However, as noted in the survey by \citet{corberan2021arc}, splittable service is less common in arc routing problems.

\citet{BompadreDO06} improved these results by an additive term of $\Omega(\frac{1}{k^3})$. More recently, \citet{blauth2022improving} achieved an improved approximation ratio of $\alpha + 1 - \varepsilon$ for the equal-demand case and $\alpha + 2 - 2\varepsilon$ for the general case, where $\varepsilon > \frac{1}{3000}$ when $\alpha = \frac{3}{2}$.
As a further improvement for the general case, \citet{uncvrp} achieved an approximation ratio of $\alpha + 1 + \ln 2 - \varepsilon$ for some positive constant $\varepsilon$.
In addition, \citet{DBLP:conf/mfcs/zhao} studied the case where the vehicle capacity $k$ is a fixed integer, and proposed a $(\frac{5}{2} - \Theta(\frac{1}{\sqrt{k}}))$-approximation algorithm for the equal-demand case, and a $(\frac{5}{2} + \ln 2 - \Theta(\frac{1}{\sqrt{k}}))$-approximation algorithm for the general case.

Approximation algorithms on multi-depot CVRP could be found in~\citep{tight,HarksKM13,zhao2025multidepot}.

For the equal-demand CARP, \citet{jansen1993bounds} proposed a $(\beta + 1 - \frac{\beta}{\capacity})$-approximation algorithm on metric graphs, where $\beta = \frac{3}{2}$ is the approximation ratio of the RPP~\citep{eiselt1995arc}.
For the general CARP, \citet{jansen1993bounds} gave a $(\beta+\frac{\capacity}{\capacity_s} - \frac{\beta}{\capacity_s})$-approximation algorithm on metric graphs, where $\capacity_s = \lceil \frac{\capacity}{2} \rceil$.
This yields an approximation ratio of $\beta + 2 - \frac{2\beta}{\capacity}$ when $\capacity$ is even, and $\beta + 2 - \frac{2\beta}{\capacity + 1}$ when $\capacity$ is odd.
\citet{wohlk2008decade} also obtained the same approximation ratios on metric graphs using a dynamic programming approach.
Notably, there is no approximation gap between metric and non-metric graphs for CARP, since \citet{van2014constant} showed that any $\rho$-approximation algorithm for CARP on metric graphs also yields a $\rho$-approximation algorithm on non-metric graphs.

It is worth noting that when $k = O(1)$, CVRP can be reduced to the \textsc{Minimum Weight $k$-Set Cover Problem}~\citep{chvatal1979greedy,hassin2005better,gupta2023local} in polynomial time $n^{O(k)}$, while preserving the approximation ratio (see~\citep{DBLP:conf/mfcs/zhao} for details).
This reduction also applies to CARP.
Therefore, based on the results of \citet{gupta2023local}, both CVRP and CARP admit an approximation ratio of $\min\{H_k - \frac{1}{8k}, H_k - \sum_{i=1}^{k} \frac{\log i}{8k i} \}$, where $H_k = \sum_{i=1}^k \frac{1}{i}$ denotes the $k$-th harmonic number.

Approximation algorithms on multi-depot CARP could be found in~\citep{yu2023exact}.

We observe that, in the early development of approximation algorithms, there was little to no gap between the approximation ratios for CVRP and CARP. However, as approximation algorithms for CVRP have advanced, the gap has grown significantly.
In particular, the metric TSP now admits an approximation ratio of $\alpha = \frac{3}{2} - 10^{-36}$~\citep{KarlinKG21, DBLP:conf/ipco/KarlinKG23}, while the best known approximation ratio for the RPP remains $\beta = \frac{3}{2}$~\citep{eiselt1995arc}.
Moreover, the best known approximation ratios for CARP are still those given by \citet{jansen1993bounds}.
This raises a natural question: can the techniques developed for CVRP be extended to CARP to improve upon the existing results?

\subsection{Our results}
In this paper, we focus on approximation algorithms for equal-demand CARP.
According to the result of \citet{van2014constant}, it suffices to consider metric graphs.
By extending the methods of analyzing the lower bounds of an optimal solution in~\citep{DBLP:conf/mfcs/zhao}, we propose a $(\frac{5}{2} - \Theta(\frac{1}{\sqrt{k}}))$-approximation algorithm, which constitutes the first improvement over the previous ratio of $\frac{5}{2} - \frac{1.5}{k}$ by \citet{jansen1993bounds}.
A summary of previous and our results for equal-demand CARP can be found in Table~\ref{res}.

\begin{table}[ht]
\small
\centering
\resizebox{0.7\textheight}{!}{
\begin{tabular}{cccccccc}
\hline
  $k$ & 3 & 4 & 5 & 6 & 7 & 8 & $\cdots$\\
\hline
  \multirow{2}{*}{Previous Ratio} & $\boldsymbol{1.792}$ & 2.051 & 2.200 & 2.250 & 2.286 & 2.313 & $\cdots$\\
  &\citep{gupta2023local} & \citep{jansen1993bounds} & \citep{jansen1993bounds} & \citep{jansen1993bounds} & \citep{jansen1993bounds} & \citep{jansen1993bounds}\\
\hline
  Our Ratio & ${1.889}$ & $\boldsymbol{2.000}$ & $\boldsymbol{2.086}$ & $\boldsymbol{2.143}$ & $\boldsymbol{2.184}$ & $\boldsymbol{2.215}$ & $\cdots$\\
\hline
\end{tabular}}
\caption{Previous and our approximation ratios for equal-demand CARP.}
\label{res}
\end{table}

Similar to the previous algorithm, our algorithm first computes an RPP tour, which is a solution to RPP, and then obtain a solution to CARP by partitioning the RPP tour. 
The key difference is that we use two different RPP tours.
The first is computed using the $\frac{3}{2}$-approximation algorithm for RPP~\citep{eiselt1995arc}. The second is obtained in a different way: we begin by finding a minimum-cost even-degree multi-graph that includes all edges with positive demand, using a minimum-cost perfect matching algorithm~\citep{schrijver2003combinatorial}; then, we connect the resulting components by greedily adding minimum-cost edges between them, similar to Kruskal’s algorithm~\citep{schrijver2003combinatorial}; last, we form an Eulerian graph by doubling the added edges, and then obtain an RPP tour by shortcutting the Eulerian graph.

\section{Preliminary}
We consider CARP on metric graphs. The input graph, denoted by $G=(V\cup\{v_0\},E)$, is an undirected complete graph, where $v_0$ is the depot and the edge cost function $c:E\to\mathbb{R}_{\geq0}$ satisfies $c(a,a)=0$, $c(a,b)=c(b,a)$, and the triangle inequality $c(a,h)\leq c(a,b)+c(b,h)$ for all $a,b,h\in V$.
Let $E^*\subseteq E$ denote the set of edges (customers) with positive demand. 
By making copies of vertices that are shared among multiple edges in $E^*$, we assume w.l.o.g.\ that the edges in $E^*$ are vertex-disjoint. 
Moreover, by the triangle inequality, we can move any vertex in $V$ that is not incident to an edge in $E^*$. Thus, we assume that each vertex in $V$ is incident to exactly one edge in $E^*$, which implies that $\size{E^*}=\frac{\size{V}}{2}$. We let $\size{V}=n$, and for any positive integer $t$, we define $[t]=\{1,2,...,t\}$. 
We assume that there is an unlimited number of vehicles at the depot, each with capacity $\capacity \in \mathbb{Z}_{\geq 1}$.

For any subset $E'\subseteq E^*$, define $\delta(E')=\sum_{(v_i,v_j)\in E'}\frac{c(v_0,v_i)+c(v_i,v_j)+c(v_0,v_j)}{2}$.
For any multi-graph subgraph $S$ of $G$, let $V(S)$ (resp., $E(S)$) denote the set of its vertices (resp., the (multi-)set of its edges). 
We define $c(S)=\sum_{e\in E(S)}c(e)$. Throughout the paper, we work with multi-edge sets, and the union of any two edge sets is taken with multiplicities.

A \emph{walk} $W$ in a graph, denoted by $(v_1,v_2,\dots, v_l)$, is a sequence of vertices where each consecutive pair $(v_i,v_{i+1})$ is connected by an edge, and vertices may appear multiple times.
We use $E(W)$ (resp., $V(W)$) to denote the multi-set of edges $\{(v_1,v_2),...,(v_{l-1},v_{l})\}$ (resp., the set of vertices $\{v_1,...,v_l\}$) and define the cost of the walk as $c(W)=\sum_{e\in E(W)}c(e)$.
A \emph{closed walk} is a walk in which the first and last vertices are the same, while a \emph{cycle} is a closed walk in which no other vertex is repeated.
Given a closed walk $W$, we can obtain a new walk $W'$ by skipping some vertices or edges along $W$; this process is called \emph{shortcutting}.
By the triangle inequality, shortcutting does not increase the cost, i.e., $c(W')\leq c(W)$.
A \emph{tour} or \emph{route} is a closed walk that starts and ends at the depot $v_0$, and an \emph{RPP tour} is a tour that traverses every edge in $E^*$ at least once.  
By the triangle inequality, we may assume that any RPP tour consists of exactly $\size{V}+1$ distinct edges, i.e., it forms a \emph{Hamiltonian cycle} in $G$. Let $H^*$ denote an optimal RPP tour in $G$, and then it is a minimum-cost Hamiltonian cycle in $G$ such that $E^*\subseteq E(H^*)$.

A solution to the CARP is a set of tours, where each tour corresponds to the route of a single vehicle and serves a set of edges (customers) with total demand at most $k$ in $E^*$.
By the triangle inequality, we may assume that each vehicle route forms a cycle and visits only the vertices of its served edges along with the depot. Then, each route $T$ can be represented as $(v_0, v_1, v_2, \dots, v_{2l-1}, v_{2l}, v_0)$, where it serves the set of customers $\{(v_{2i-1}, v_{2i}) \mid i \in [l]\}$, and $\sum_{i\in[l]} d(v_{2i-1}, v_{2i}) \leq k$. Hence, we define the set of customers served by route $T$ as $E^*_T= E^*\cap E(T)$ and also let $V_T=\{v_0,v_1,...,v_{2l}\}$.

CARP is formally defined as follows.

\begin{definition}[CARP]\label{def}
Given an undirected complete graph $G=(V\cup\{v_0\},E)$, a set of vertex-disjoint edges $E^*\subseteq E$, where each edge $e\in E^*$ has demand $d(e)\in\mathbb{Z}_{\geq 1}$, and a vehicle capacity $k\in\mathbb{Z}_{\geq 1}$, the objective is to find a set of tours $\T$ in $G$ such that
\begin{enumerate}
    \item[(1)] each tour $T\in\T$ servers all customers in $E^*_T= E^*\cap E(T)$ with total demand at most $k$, i.e., $\sum_{e\in E^*_T}d(e)\leq k$;
    \item[(2)] all tours together serve all customers, i.e., $\bigcup_{T\in\T}E^*_T=E^*$;
    \item[(3)] each customer is served by exactly one tour in $\T$, i.e., $E^*_T\cap E^*_{T'}=\emptyset$ for all $T,T'\in\T$; 
    \item[(4)] the total cost, i.e., $\sum_{T\in\T}c(T)$, is minimized.
\end{enumerate}
\end{definition}

By our assumptions, we also have 
\begin{enumerate}
    \item[(5)] the tours in $\T^*$ are pairwise edge-disjoint, and any two tours share only the common vertex $v_0$, i.e., $E(T)\cap E(T')=\emptyset$ and $V_T\cap V_{T'}=\{v_0\}$ for all $T,T'\in\T$.
\end{enumerate}

We consider unite-demand CARP, where we have $d(e)=1$ for all $e\in E^*$.

\section{The Algorithms}
We first review the previous approximation algorithm given by  \citet{jansen1993bounds}. %Then, we propose our algorithm. 

\subsection{The previous tour partition algorithm}
For equal-demand CVRP, a well-known algorithm is the \emph{Iterated Tour Partitioning} (ITP) algorithm~\citep{HaimovichK85, altinkemer1990heuristics}.
ITP first computes an approximate Hamiltonian cycle and then obtains a solution to equal-demand CVRP by appropriately partitioning the cycle into fragments (paths), each of which is transformed into a tour.

For equal-demand CARP, \citet{jansen1993bounds} extended the ITP framework and proposed a similar tour partitioning algorithm, referred to as JITP.
In the first step of JITP, instead of computing an arbitrary Hamiltonian cycle, the algorithm constructs an RPP tour that traverses all edges in $E^*$.
Then, analogous to ITP, JITP partitions the RPP tour into a set of fragments, and each fragment is transformed into a tour by connecting its two endpoints to the depot.
JITP may explore multiple partitioning strategies and select the one that yields the minimum-cost CARP solution.

Specifically, let $m=\frac{n}{2}$, and suppose an RPP tour in $G$ is given by $(v_0,v_1,v_2,...,v_{2m-1},v_{2m},v_0)$, where $(v_{2i-1},v_{2i})\in E^*$ for each $i\in[m]$.
For convince, let $x_i=(v_{2i-1},v_{2i})$ for each $i\in[m]$, and use $(x_i,...,x_{i+i'})$ to represent the tour $(v_0,v_{2i-1},v_{2i},...,v_{2i+2i'-1},v_{2i+2i'},v_0)$, which is obtained by connecting the endpoints of the fragment $(v_{2i-1},v_{2i},...,v_{2i+2i'-1},v_{2i+2i'})$ to the depot. Then, IJTP returns the best solution among $k$ potential ones. 

In the $i$-th potential solution, denoted by $\T_{i}$, we have $\T_{i}=\{T^1_i,...,T^N_i\}$, where $N=\ceil{\frac{n-i}{k}}+1$, $T^1_i=(x_1,...,x_i)$, $T^j_i=(x_{i+(j-2)k+1},...,x_{i+(j-1)k})$ for each $2\leq j< N$, and $T^{N}_i=(x_{i+(N-2)k+1},...,x_{n})$.

Therefore, in the solution of JITP, except possibly for the first and the last tours, each tour serves exactly $k$ customers. We remark that \citet{wohlk2008decade} proved that the optimal partitioning can be computed in polynomial time using a dynamic programming approach.

We have the following result.

\begin{lemma}[\citep{jansen1993bounds,wohlk2008approximation,van2014constant}]\label{JITP}
For equal-demand CARP, given an RPP tour $H$ in $G$ with $E^*\subseteq E(H)$, there is a polynomial-time algorithm that computes a solution with cost at most $\frac{2}{\capacity}\delta(E^*)+\frac{\capacity-1}{\capacity}c(H)$.
\end{lemma}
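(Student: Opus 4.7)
The plan is to bound the total cost $\sum_{i=1}^{k} c(\T_i)$ of the $k$ candidate partitions produced by JITP in closed form and then extract the claim by averaging. For a fixed index $i\in[k]$, let $S_i=\{i,i+k,i+2k,\ldots\}\cap\{1,\ldots,m-1\}$ denote the set of split positions of partition $i$. Starting from $c(H)$, each split at position $s$ removes the ``connecting'' edge $(v_{2s},v_{2s+1})$ of $H$ and inserts the two new depot edges $(v_0,v_{2s})$ and $(v_0,v_{2s+1})$, while the outer edges $(v_0,v_1)$ and $(v_{2m},v_0)$ of $H$ are reused as-is by the first and last fragments. This gives the clean identity
\[
c(\T_i) \;=\; c(H) \;-\; \sum_{s\in S_i} c(v_{2s},v_{2s+1}) \;+\; \sum_{s\in S_i}\bigl(c(v_0,v_{2s})+c(v_0,v_{2s+1})\bigr).
\]

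Next I would sum this identity over $i\in[k]$. The combinatorial fact driving everything is that each $s\in\{1,\ldots,m-1\}$ lies in exactly one $S_i$ (namely the one with $i\equiv s\pmod{k}$), so after summation both inner sums extend to unrestricted sums over $\{1,\ldots,m-1\}$. The removed-edge sum collapses to the total cost of the connecting edges of $H$, which by the edge decomposition of the Hamiltonian cycle $H$ equals $c(H)-c(E^*)-c(v_0,v_1)-c(v_0,v_{2m})$. For the added-edge sum, the identity $c(v_0,v_{2l-1})+c(v_0,v_{2l})=2\delta(x_l)-c(x_l)$, immediate from the definition of $\delta$, rewrites it as $2\delta(E^*)-c(E^*)-c(v_0,v_1)-c(v_0,v_{2m})$, where the boundary contributions at $l=1$ and $l=m$ are excluded because $s$ only ranges up to $m-1$.

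Substituting these two expressions, the $c(E^*)$ terms and the two outer depot edges cancel cleanly, leaving
\[
\sum_{i=1}^{k}c(\T_i) \;=\; (k-1)\,c(H) \;+\; 2\,\delta(E^*).
\]
Averaging and taking the best of the $k$ candidates yields some $\T_i$ with $c(\T_i)\le\tfrac{k-1}{k}c(H)+\tfrac{2}{k}\delta(E^*)$, and since each candidate is constructible in linear time the procedure runs in polynomial time. The main obstacle is keeping the depot-edge bookkeeping straight: one must resist double-counting the two outer edges $(v_0,v_1)$ and $(v_{2m},v_0)$ as both retained-from-$H$ and newly added, which is exactly why the added-edge summation stops at $m-1$ rather than running over all $m$ customer edges; once that is handled correctly, the cancellations are mechanical.
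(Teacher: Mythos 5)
Your argument is correct: the exact cost identity for each candidate partition, the observation that every split position $s\in\{1,\dots,m-1\}$ occurs in exactly one $S_i$, and the resulting identity $\sum_{i=1}^{k}c(\T_i)=(k-1)c(H)+2\delta(E^*)$ followed by averaging give precisely the claimed bound, and the boundary bookkeeping for $(v_0,v_1)$ and $(v_{2m},v_0)$ is handled correctly. The paper itself does not prove Lemma~\ref{JITP} but cites it from prior work, and your proof is essentially the standard tour-partitioning averaging argument used there, so there is nothing further to flag.
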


\subsection{Our algorithm}
In our algorithm, we use JITP as a subroutine. We first construct two RPP tours, and then use the one with smaller cost to call the JITP algorithm. 
The first RPP tour, denoted by $H_1$, is obtained directly by using the $\frac{3}{2}$-approximation algorithm~\citep{eiselt1995arc}. 
The second one, denoted by $H_2$, is obtained in the following way.

First, we find a minimum-cost perfect matching $M$ in the graph $G[V]$~\citep{schrijver2003combinatorial}. %Note that  $E(M)\cup E^*$ forms a subgraph where each vertex in $V$ has an even degree. 
Then, in the multi-graph $G'=(V\cup\{v_0\}, E(M)\cup E^*)$, every vertex has an even degree. Moreover, the graph consists of a set of components.
Then, similar to Kruskal’s algorithm~\citep{schrijver2003combinatorial}, we connect these components by greedily adding minimum-cost edges between them. Let $F$ be the set of added edges. Then, adding all edges in $F$ to $G'$ results in a connected graph. Then, the multi-graph $G''=(V\cup\{v_0\}, E(M)\cup E^*\cup F\cup F)$ forms an \emph{Eulerian graph}, i.e., a connected graph and every vertex in it has an even degree. Thus, a tour $T$ such that $E(T)=E(G'')$, also known as an \emph{Eulerian tour}, can be obtained in polynomial time~\citep{schrijver2003combinatorial}. Then, by further shortcutting the tour $W$, an RPP tour is obtained.

We remark that the above algorithm uses the idea of the double-tree algorithm for TSP~\citep{williamson2011design}.

%$=(V\cup\{v_0\}, E(M)\cup E^*\cup F)$ forms a connected graph. Moreover, 

%$G'=(V,E\setminus E^*)$.

An illustration of our algorithm for equal-demand CARP can be found in Algorithm~\ref{alg}.

\begin{algorithm}[ht]
\caption{The approximation algorithm for equal-demand CARP}
\label{alg}
\small
\vspace*{2mm}
\textbf{Input:} An instance $G=(V,E)$ of equal-demand CARP. \\
\textbf{Output:} A solution to equal-demand CARP.

\begin{algorithmic}[1]
\State\label{line1} Obtain an RPP tour $H_1$ by using the $\frac{3}{2}$-approximation algorithm~\citep{eiselt1995arc}.

\State\label{line2} Find a minimum-cost perfect matching $M$ in $G[V]$~\citep{schrijver2003combinatorial}. 

\State\label{line3} Connect the components in $G'=(V\cup\{v_0\}, E(M)\cup E^*)$ by greedily adding minimum-cost edges between them, and let $F$ be the set of added edges.

\State\label{line4} Obtain an RPP tour $H_2$ by finding an Eluerian tour $T$ in $G''=(V\cup\{v_0\}, E(M)\cup E^*\cup F\cup F)$ and then shortcutting $T$.

\State\label{line5} Obtain a solution $\T$ by using the RPP tour $\arg\min\{c(H_1),c(H_2)\}$ to call the JITP algorithm.

\State \Return $\T$.
\end{algorithmic}
\end{algorithm}

\section{Performance Analysis}
In this section, we use $\T^*$ to denote an optimal solution to equal-demand CARP and define $\OPT=c(\T^*)$. 
In the following, we first recall two lower bounds on $\OPT$ in~\citep{jansen1993bounds}; then, we analyze the upper bounds on the cost of the used RPP tours in our algorithm and propose new lower bounds on $\OPT$; last, we analyze the approximation ratio of our algorithm by using these lower bounds on $\OPT$.

Recall that $H^*$ denotes an optimal RPP tour, and $\delta(E')=\sum_{(v_i,v_j)\in E'}\frac{c(v_0,v_i)+c(v_i,v_j)+c(v_0,v_j)}{2}$ for any $E'\subseteq E^*$. 
There are two known lower bounds on $\OPT$.

\begin{lemma}[\citep{jansen1993bounds}]\label{rrp}
It holds that $\frac{2}{k}\delta(E^*)\leq\OPT$ and $c(H^*)\leq\OPT$.
\end{lemma}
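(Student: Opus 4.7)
The plan is to prove the two inequalities separately, each being a short triangle-inequality argument applied to the optimal CARP solution $\T^*$.

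For the bound $c(H^*) \leq \OPT$, the idea is to turn $\T^*$ into a feasible RPP tour whose cost is exactly $\OPT$. Specifically, I would concatenate all tours in $\T^*$ at the depot $v_0$ to obtain a single closed walk $W$ that starts and ends at $v_0$. Since every customer in $E^*$ is served by some $T\in\T^*$, the multi-set of edges of $W$ contains $E^*$, so $W$ is (after possible shortcutting at repeated vertices other than $v_0$) a Hamiltonian cycle in $G$ traversing all of $E^*$, i.e., a feasible RPP tour. The triangle inequality guarantees that shortcutting does not increase the cost, so the resulting RPP tour has cost at most $c(W)=\sum_{T\in\T^*}c(T)=\OPT$. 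Since $H^*$ is a minimum-cost RPP tour, the inequality $c(H^*)\leq\OPT$ follows.

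For the bound $\frac{2}{k}\delta(E^*)\leq\OPT$, I would argue tour-by-tour. Fix any $T\in\T^*$, written as $(v_0,v_1,v_2,\dots,v_{2l},v_0)$ with customers $E^*_T=\{(v_{2i-1},v_{2i}):i\in[l]\}$ and $l=|E^*_T|\leq k$. For each $i\in[l]$, the tour $T$ contains a sub-walk from $v_0$ to $v_{2i-1}$, then uses the edge $(v_{2i-1},v_{2i})$, then a sub-walk from $v_{2i}$ back to $v_0$; applying the triangle inequality to each of the two sub-walks gives
\begin{equation*}
c(T) \;\geq\; c(v_0,v_{2i-1}) + c(v_{2i-1},v_{2i}) + c(v_{2i},v_0).
\end{equation*}
Summing this inequality over the $l\leq k$ customers of $T$ yields $k\cdot c(T) \geq l\cdot c(T) \geq 2\,\delta(E^*_T)$. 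Summing over all $T\in\T^*$ and using the fact that $\{E^*_T\}_{T\in\T^*}$ partitions $E^*$ (conditions (2) and (3) of Definition~\ref{def}) then gives $\OPT = \sum_{T\in\T^*}c(T)\geq \frac{2}{k}\sum_{T\in\T^*}\delta(E^*_T) = \frac{2}{k}\delta(E^*)$, as required.

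Both arguments are short and rely only on the triangle inequality together with basic properties of $\T^*$ and $H^*$; there is no real obstacle beyond making sure that (i) the concatenated walk in the first part can legitimately be shortcut into a Hamiltonian RPP tour, and (ii) the double counting in the second part uses the correct capacity factor $1/k$ rather than $1/l$ (so the bound holds uniformly for every tour even when some tours serve strictly fewer than $k$ customers).
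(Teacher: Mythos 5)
Your proof is correct, and it is the standard argument behind this lemma: the paper itself states it without proof, citing \citet{jansen1993bounds}, and the cited justification is exactly your two steps (concatenating the optimal tours at $v_0$ and shortcutting to obtain a feasible RPP tour, and the per-customer triangle-inequality bound $c(T)\geq c(v_0,v_{2i-1})+c(v_{2i-1},v_{2i})+c(v_{2i},v_0)$ summed over at most $k$ customers per tour). The only nitpick is that, under the paper's assumptions, the concatenated walk repeats only the depot, so the shortcutting is at the repeated visits of $v_0$ rather than at ``repeated vertices other than $v_0$''; this does not affect the validity of the argument.
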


Next, we analyze the used RPP tours in our algorithm.

Let $E'$ denote a minimum-cost edge set such that $E'\cup E^*$ forms a spanning tree in $G$. Then, we define $\MST=c(E')+c(E^*)$.

The first RPP tour $H_1$ is obtained by using the $\frac{3}{2}$-approximation algorithm~\citep{eiselt1995arc}. We have the following result.

\begin{lemma}\label{rpp1}
It holds that $c(H_1)\leq\MST+\frac{1}{2}\OPT$.
\end{lemma}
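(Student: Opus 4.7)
The plan is to follow the three-step structure of the RPP $\tfrac{3}{2}$-approximation of \citet{eiselt1995arc}, which is a direct Christofides-style algorithm, and bound each contribution individually. That algorithm (i) starts from the required edges $E^*$; (ii) adds a minimum-cost edge set $E'$ that connects the components of $(V\cup\{v_0\}, E^*)$, equivalently, a minimum-cost edge set such that $E^*\cup E'$ is a spanning tree, so the cost accumulated so far is exactly $\MST$; and (iii) computes a minimum-cost perfect matching $M'$ on the set $T$ of odd-degree vertices of $E^*\cup E'$, after which $H_1$ is obtained by extracting an Eulerian tour of $E^*\cup E'\cup M'$ and shortcutting. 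By the triangle inequality,
\[
c(H_1)\ \leq\ c(E^*)+c(E')+c(M')\ =\ \MST + c(M'),
\]
so the whole statement reduces to proving $c(M')\leq \tfrac{1}{2}\OPT$.

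For the matching bound I would argue exactly as in Christofides' analysis. Because $E^*\cup E'$ is spanning, we have $T\subseteq V\cup\{v_0\}$, and the handshake lemma guarantees that $|T|$ is even. The optimal RPP tour $H^*$ can be taken to be a Hamiltonian cycle on $V\cup\{v_0\}$ containing $E^*$, so it visits every vertex of $T$; shortcutting $H^*$ onto $T$ produces, by the triangle inequality, a cycle $C$ on $T$ with $c(C)\leq c(H^*)$. Since $|T|$ is even, $C$ decomposes into two edge-disjoint perfect matchings on $T$, and the cheaper of the two has cost at most $\tfrac{1}{2}c(C)$. Hence
\[
c(M')\ \leq\ \tfrac{1}{2}\,c(C)\ \leq\ \tfrac{1}{2}\,c(H^*)\ \leq\ \tfrac{1}{2}\,\OPT,
\]
where the final inequality is Lemma~\ref{rrp}. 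Chaining the two displays gives $c(H_1)\leq \MST+\tfrac{1}{2}\OPT$.

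I do not anticipate a genuine obstacle. The only things to verify carefully are that $T\subseteq V(H^*)$, which is automatic since both $E^*\cup E'$ and $H^*$ span $V\cup\{v_0\}$, and that $|T|$ is even; both are immediate. Everything else is standard Christofides-type accounting combined with the lower bound $c(H^*)\leq \OPT$ already recorded in Lemma~\ref{rrp}, so the proof should amount to a clean two-line calculation plus the matching-from-Hamiltonian-cycle argument.
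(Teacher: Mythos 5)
Your proposal is correct and follows essentially the same route as the paper: the paper simply cites the well-known Frederickson/Christofides-style bound $c(H_1)\leq\MST+\frac{1}{2}c(H^*)$ for the RPP algorithm of \citet{eiselt1995arc} and combines it with $c(H^*)\leq\OPT$ from Lemma~\ref{rrp}, exactly as you do. The only difference is that you re-derive the cited bound in full (spanning tree containing $E^*$ of cost $\MST$, plus the matching-from-shortcut-cycle argument giving $c(M')\leq\frac{1}{2}c(H^*)$), which is a correct unpacking of the standard analysis rather than a new argument.
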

\begin{proof}
It is well known that $c(H_1)\leq \MST+\frac{1}{2}c(H^*)$~\citep{eiselt1995arc}. Since $c(H^*)\leq \OPT$ by Lemma~\ref{rrp}, we obtain the desired result.
\end{proof}

\begin{lemma}\label{rpp2}
It holds that $c(H_2)\leq\OPT+2\MST-2c(E^*)$.
\end{lemma}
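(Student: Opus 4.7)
The plan is to decompose the cost of $H_2$ and bound each piece separately. Since $H_2$ is obtained by shortcutting an Eulerian tour of the multigraph $G'' = (V\cup\{v_0\}, E(M) \cup E^* \cup F \cup F)$, and shortcutting never increases cost under the triangle inequality, we have
\[
c(H_2) \leq c(M) + c(E^*) + 2c(F).
\]
Thus it suffices to establish the two bounds (i) $c(M) \leq \OPT - c(E^*)$ and (ii) $c(F) \leq c(E')$; combining these with $\MST = c(E') + c(E^*)$ yields $c(H_2) \leq \OPT + 2\MST - 2c(E^*)$, as claimed.

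For bound (i), I would extract a perfect matching of $V$ from the optimal solution $\T^*$. For each tour $T^* = (v_0, v_1, \ldots, v_{2l}, v_0) \in \T^*$, the ``intermediate'' edges $(v_{2i}, v_{2i+1})$ for $i \in [l-1]$ already form a partial matching on $\{v_2, \ldots, v_{2l-1}\}$; to pair up the remaining endpoints $v_1$ and $v_{2l}$, I add the shortcut edge $(v_1, v_{2l})$, which by the triangle inequality satisfies $c(v_1, v_{2l}) \leq c(v_0, v_1) + c(v_0, v_{2l})$. Hence the matching extracted from $T^*$ has cost at most $c(T^*) - c(E^*_{T^*})$, and taking the disjoint union over all tours in $\T^*$ produces a perfect matching of $V$ of cost at most $\OPT - c(E^*)$. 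Since $M$ is a minimum-cost perfect matching of $G[V]$, we conclude $c(M) \leq \OPT - c(E^*)$.

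For bound (ii), I would first observe that each vertex of $V$ has degree exactly two in $G'$ (one edge from $E^*$ and one from $M$), so the components of $G'$ are a disjoint union of even cycles on $V$ together with the isolated supernode $\{v_0\}$. The greedy procedure that builds $F$ is exactly Kruskal's algorithm on the quotient graph obtained by contracting each component of $G'$, so $c(F)$ equals the MST cost of that quotient. Now every edge of $E^*$ becomes a self-loop under contraction (both endpoints lie in a common component of $G'$), and since $E' \cup E^*$ is a spanning tree of $G$, contracting preserves connectivity; hence the image of $E'$ alone is a spanning connected subgraph of the quotient. A spanning tree contained in that image has cost at most $c(E')$, and the MST of the quotient does no worse, giving $c(F) \leq c(E')$.

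The main subtlety lies in step (i): one must absorb the two depot edges of each optimal tour into a single shortcut edge via the triangle inequality in order to close up the matching on $V$ so that its cost compares cleanly against $c(T^*) - c(E^*_{T^*})$. Step (ii) is largely structural once the component decomposition of $G'$ into cycles is identified, reducing to a standard MST-via-contraction argument.
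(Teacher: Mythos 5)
Your proof is correct and follows essentially the same route as the paper: the same decomposition $c(H_2)\leq c(E^*)+c(M)+2c(F)$, the bound $c(M)\leq\OPT-c(E^*)$ via a perfect matching extracted from $\T^*$ by shortcutting the depot, and $c(F)\leq c(E')$ via the Kruskal-type optimality of the greedily chosen connecting edges. Your write-up merely makes explicit (the matching construction per tour, the quotient/contraction view) what the paper states more tersely.
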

\begin{proof}
Recall that $H_2$ is obtained by shortcutting the graph $G''=(V\cup\{v_0\}, E(M)\cup E^*\cup F\cup F)$. By the triangle inequality, we have 
\begin{equation}\label{eq1}
c(H_2)\leq c(E^*)+c(M)+2c(F).
\end{equation}

By line~\ref{line2}, the graph $G'=(V\cup\{v_0\}, E(M)\cup E^*)$ is the minimum-cost Eulerian graph containing all edges in $E^*$. 
By shortcutting the depot $v_0$ from the optimal solution $\T^*$, we obtain an Eulerian graph containing all edges in $E^*$ with cost at most $\OPT$. Thus, we have 
\begin{equation}\label{eq2}
c(E^*)+c(M)\leq\OPT. 
\end{equation}

By line~\ref{line3} and the proof of Kruskal’s algorithm~\citep{schrijver2003combinatorial}, $F$ is the minimum-cost set of edges such that adding all edges in $F$ to the graph $G'=(V\cup\{v_0\}, E(M)\cup E^*)$ yields a connected graph. 
Recall that $E'$ is a minimum-cost edge set such that $E'\cup E^*$ forms a spanning tree in $G$. Thus, adding all edges in $E'$ to $G'$ yields a connected graph, and then we have 
\begin{equation}\label{eq3}
c(F)\leq c(E')=\MST-c(E^*).
\end{equation}

Therefore, by (\ref{eq1}), (\ref{eq2}), and (\ref{eq3}), we have $c(H_2)\leq\OPT+2\MST-2c(E^*)$.
\end{proof}

For any tour $T=(v_0,v_1,v_2,...,v_{2t-1},v_{2t},v_0)$ in the optimal solution $\T^*$, we have $t\leq k$. Recall that $E^*_T=\{(v_{2i-1},v_{2i})\mid i\in [t]\}$ and $V_T=\{v_0,v_1,...,v_{2t}\}$. 
We also define $MST_T$ be the cost of a minimum-cost spanning tree in the graph $G[V_T]$ containing all edges in $E^*_T$.

We have the following property.

\begin{lemma}\label{tem}
It holds that $\delta(E^*)=\sum_{T\in\T^*}\delta(E^*_T)$ and $\MST\leq\sum_{T\in\T^*}\MST_T$.
\end{lemma}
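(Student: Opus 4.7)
The plan is to prove the two assertions in sequence, both of which reduce to bookkeeping on the partition of $E^*$ induced by $\T^*$.

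For the first identity, I would simply invoke conditions (2) and (3) of Definition~\ref{def}, which say that $\{E^*_T\}_{T\in\T^*}$ is a partition of $E^*$. Since $\delta(E')=\sum_{(v_i,v_j)\in E'}\frac{c(v_0,v_i)+c(v_i,v_j)+c(v_0,v_j)}{2}$ is a sum of a per-edge quantity, splitting the sum according to which tour serves each edge gives $\delta(E^*)=\sum_{T\in\T^*}\delta(E^*_T)$ at once.

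For the inequality $\MST\leq\sum_{T\in\T^*}\MST_T$, the plan is to piece together the per-tour spanning trees into a global spanning subgraph of $G$ containing $E^*$, and then extract a spanning tree from it. Concretely, for each $T\in\T^*$, let $S_T$ be a minimum-cost spanning tree of $G[V_T]$ that contains $E^*_T$, so $c(S_T)=\MST_T$, and write $E'_T=E(S_T)\setminus E^*_T$, hence $c(E'_T)=\MST_T-c(E^*_T)$. Form the edge set $\widetilde F:=\bigl(\bigcup_{T\in\T^*}E'_T\bigr)\cup E^*$ in $G$.

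The key observation is that $\widetilde F$ is a connected subgraph of $G$ spanning $V\cup\{v_0\}$. Connectivity holds because each $S_T$ is connected on $V_T$ and every $V_T$ contains the depot $v_0$, so the union of the $S_T$'s is connected through $v_0$. Spanning holds because, by the preliminary assumption, each vertex of $V$ is incident to exactly one edge of $E^*$, and that edge belongs to $E^*_T$ for exactly one $T$, giving $\bigcup_{T\in\T^*}V_T=V\cup\{v_0\}$. Since $E^*$ is a matching (and therefore a forest) contained in $\widetilde F$, I can extend $E^*$ to a spanning tree of $G$ using only edges of $\widetilde F$ (for instance, by a Kruskal-style augmentation started from $E^*$). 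The resulting spanning tree contains $E^*$ and has cost at most
\[
c(E^*)+\sum_{T\in\T^*}c(E'_T)=c(E^*)+\sum_{T\in\T^*}\bigl(\MST_T-c(E^*_T)\bigr)=\sum_{T\in\T^*}\MST_T,
\]
where the last equality uses $\sum_T c(E^*_T)=c(E^*)$ from the partition property. By the definition of $\MST$ as the minimum cost of a spanning tree of $G$ containing $E^*$, this yields $\MST\leq\sum_{T\in\T^*}\MST_T$. No step in either part presents a genuine obstacle; the only point requiring a little care is verifying that the union of the per-tour trees really spans $V\cup\{v_0\}$, which follows directly from the assumption that each vertex is incident to a unique $E^*$-edge.
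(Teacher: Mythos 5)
Your proposal is correct and follows essentially the same route as the paper: split $\delta$ over the partition $\{E^*_T\}$ given by Definition~\ref{def}(2)--(3), and glue the per-tour minimum spanning trees through the depot into a spanning structure of $G$ containing $E^*$, then invoke the minimality of $\MST$. The only cosmetic difference is that you extract a spanning tree from the connected union via a Kruskal-style augmentation (so you never need the tours' vertex sets to be disjoint outside $v_0$), whereas the paper uses property (5) to note the union is already a spanning tree of cost exactly $\sum_{T\in\T^*}\MST_T$.
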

\begin{proof}
Since $\T^*$ is a solution to equal-demand CARP, we have $E^*=\bigcup_{T\in\T^*}E^*_T$ and $E^*_T\cap E^*_{T'}=\emptyset$ for all $T,T'\in\T$ by (2) and (3) in Definition~\ref{def}. Thus, we have $\delta(E^*)=\sum_{T\in\T^*}\delta(E^*_T)$.

Moreover, by (5) in Definition~\ref{def}, we have $V=\bigcup_{T\in\T^*}V_T$ and $V_{T}\cap V_{T'}=\{v_0\}$ for all $T,T'\in\T^*$. Since $MST_T$ measures the cost of a spanning tree in $G[V_T]$, then $\sum_{T\in\T^*}\MST_T$ measures the cost of a spanning tree in $G$. By definition, any spanning tree in $G$ containing all edges in $E^*_T$ has cost at least $\MST$. Thus, we have $\MST\leq\sum_{T\in\T^*}\MST_T$.
\end{proof}

By extending the techniques in~\citep{DBLP:conf/mfcs/zhao}, we obtain the following key result.

\begin{lemma}\label{thebounds}
For any tour $T\in\T^*$, there exist parameters $\{\alpha_1,\alpha_2,...,\alpha_m\}$ such that
\begin{enumerate}
    \item[(1)] $\delta(E^*_T)\leq\lrA{\frac{\capacity}{2}+\alpha-\sum_{i=1}^{m}i\alpha_i}c(T)$;
    \item[(2)] $\MST_T\leq\lrA{1-\max_{\substack{1\leq i\leq m}}\frac{1}{2}\alpha_{i}}c(T)$;
    \item[(3)] $c(E^*_T)= (1-\alpha)c(T)$;
    \item[(4)] $\sum^m_{i=1}\alpha_i=\alpha\leq 1$, where $m=\infty$ and  $0\leq\alpha_i\leq \alpha$ for each $1\leq i\leq m$.
\end{enumerate}
\end{lemma}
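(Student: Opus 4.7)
My plan is to exhibit the $\alpha_i$'s via a greedy packing driven by $\MST_T$, verify the easy conditions by direct computation, and devote the main effort to (1), which reduces via the packing to a quantitative upper bound on $\delta(E^*_T)$ that I prove by adapting the tour-shortcut analysis of \citet{DBLP:conf/mfcs/zhao}.

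Setup and construction: write $T = (v_0, v_1, \ldots, v_{2t-1}, v_{2t}, v_0)$ with $t \leq k$ customer edges $x_i = (v_{2i-1}, v_{2i})$ and $t+1$ connector (non-customer) edges $e_0 = (v_0, v_1)$, $e_j = (v_{2j}, v_{2j+1})$ for $j \in [t-1]$, and $e_t = (v_{2t}, v_0)$. Set $\alpha := 1 - c(E^*_T)/c(T)$ and $\mu := 2(1 - \MST_T/c(T))$. Note $\mu > 0$, since deleting the longest connector from $T$ yields a Hamiltonian path of $G[V_T]$ containing $E^*_T$, hence a spanning tree of cost $< c(T)$. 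Define the $\alpha_i$'s greedily: if $\mu \geq \alpha$ take $\alpha_1 = \alpha$ with the rest zero; otherwise let $q = \lfloor \alpha/\mu \rfloor$ and set $\alpha_1 = \cdots = \alpha_q = \mu$, $\alpha_{q+1} = \alpha - q\mu$, and $\alpha_i = 0$ for $i > q+1$. Conditions (3) and (4) are immediate ($\sum_i \alpha_i = \alpha$ by construction, $\alpha \leq 1$ since $c(E^*_T) \geq 0$, and each $\alpha_i \in [0, \alpha]$), and condition (2) follows from $\max_i \alpha_i \leq \mu$, which by the definition of $\mu$ is exactly $\MST_T \leq (1 - \max_i \alpha_i/2) c(T)$.

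The main work is (1). For the greedy packing, $\sum_i i \alpha_i = q(q+1)\mu/2 + (q+1)(\alpha - q\mu)$, so (1) becomes a concrete inequality bounding $\delta(E^*_T)$ from above in terms of $c(T)$, $c(E^*_T)$, and $\MST_T$. The baseline portion $\delta(E^*_T) \leq \tfrac{k}{2} c(T)$ is the classical bound $2\delta(x_i) \leq c(T)$ (traverse the tour forward from $v_0$ to $v_{2i-1}$, through $x_i$, and backward from $v_{2i}$ to $v_0$) summed over $t \leq k$ customers. The correction arises from the MST gap $c(T) - \MST_T$: rewriting $\delta(E^*_T) = \tfrac{1}{2} c(E^*_T) + \tfrac{1}{2}\sum_{v \in V_T \setminus \{v_0\}} c(v_0, v)$ and invoking the tour shortcut $c(v_0, v_j) \leq \min(S_j, c(T) - S_j)$, where $S_j$ is the partial cost of $T$ from $v_0$ up to $v_j$, one can aggregate shortcut savings across customers, with the total saving controlled quantitatively by $c(T) - \MST_T$.

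The main obstacle will be proving this bound on $\delta(E^*_T)$ with the exact coefficients required by (1): it entails a careful charging argument between the shortcut savings and the MST gap, and matches the $q$-dependent packing factor that emerges from the definition of the $\alpha_i$'s. The adaptation from CVRP to CARP is mostly bookkeeping: the extra $\tfrac{1}{2} c(E^*_T)$ summand in $\delta(E^*_T)$ (absent in CVRP, where customers are vertices rather than edges) is absorbed cleanly by the $+\alpha$ term in (1), but the shortcut-charging scheme of \citet{DBLP:conf/mfcs/zhao} must be re-verified in the edge-customer setting of CARP, with the positions $S_{2i-1}$ and $S_{2i}$ of the two endpoints of each customer tracked jointly along the tour.
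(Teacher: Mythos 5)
Your construction of the parameters and your verification of properties (2)--(4) are fine, and your greedy choice (cap each $\alpha_i$ at $\mu=2(1-\MST_T/c(T))$ and front-load the mass) is in fact the distribution minimizing $\sum_i i\alpha_i$ subject to (2)--(4), so the instance of (1) you must prove is the weakest one compatible with your cap. The problem is that this instance of (1) is precisely where the entire content of the lemma lies, and you do not prove it: you reduce it to a quantitative bound on $\delta(E^*_T)$ in terms of $c(T)$, $c(E^*_T)$ and $\MST_T$ with the $q$-dependent coefficient $q(q+1)\mu/2+(q+1)(\alpha-q\mu)$, and then only assert that a ``charging argument between the shortcut savings and the MST gap'' will deliver it. That assertion is not a proof, and it is not clear it can be carried out in the form you describe: $\MST_T$ enters your bound only through the single scalar $\mu$, which essentially caps the largest connector-edge cost, whereas the aggregated term $\sum_i i\alpha_i$ inherently depends on \emph{where} the connector edges sit along the tour. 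The paper's proof handles exactly this by defining $\alpha_i$ (and companion parameters $\beta_i$) from the tour itself --- pairing the $i$-th connector edges counted symmetrically from the two depot ends --- bounding $c(v_0,v_{2i-1})+c(v_0,v_{2l+2-2i})$ by two-sided prefix sums of these quantities, summing over $i$ (with separate bookkeeping for the odd and even cases of $|E^*_T|$) to get $2\delta(E^*_T)\leq(k+2\alpha-\sum_j 2j\alpha_j)c(T)$, and only then relating $\MST_T$ to $\max_i\alpha_i$ by deleting the longest connector edge. Your single-vertex shortcut bound $c(v_0,v_j)\leq\min(S_j,c(T)-S_j)$ plus ``total saving controlled by $c(T)-\MST_T$'' does not by itself produce the positional weighting $i\alpha_i$; to justify your target inequality you would effectively have to reconstruct the paper's tour-indexed argument and then add an extra optimization step (that the greedy distribution under the cap $\mu$ is the worst case). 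As written, the pivotal step is a sketch, so the proposal has a genuine gap.

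One further caution on your sketch of the baseline: you write $\delta(E^*_T)=\tfrac12 c(E^*_T)+\tfrac12\sum_{v\in V_T\setminus\{v_0\}}c(v_0,v)$ and plan to absorb the $\tfrac12 c(E^*_T)$ term ``cleanly by the $+\alpha$ term in (1)''; note that in the paper's accounting this term is \emph{not} simply absorbed --- it is the customer-edge costs ($\beta_j$'s) appearing inside the prefix sums, together with $l\leq k$, that jointly produce the $k+2\alpha-\sum_j 2j\alpha_j$ coefficient, so this bookkeeping is part of what your proof would still need to supply.
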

\begin{proof}
To avoid distraction from our main discussions, we delay the proof to Section~\ref{theproof}.
\end{proof}

Next, we are ready to analyze the approximation ratio of our algorithm.
%We have the following results.
\begin{lemma}\label{opt}
The approximation ratio of our algorithm is at most
\[
\max_{\substack{l\in\mathbb{Z}_{\geq1}\\1\geq\alpha\geq0}}\min\lrC{\tau(\alpha,l),\eta(\alpha,l)},
\]
where $\tau(\alpha,l)=\frac{5\capacity-3}{2\capacity}-\frac{(2l^2-2l+k-1)\alpha}{2kl}$ and $\eta(\alpha,l)=\frac{2\capacity-1}{\capacity}-\frac{(l^2+l-2kl+k-1)\alpha}{kl}$.
% \[
% \tau(\alpha,l)=\frac{2\capacity-1}{\capacity}-\frac{(2l^2-2l+k-1)\alpha}{2kl}\quad\text{and}\quad\eta(\alpha,l)=\frac{2\capacity-1}{\capacity}-\frac{(l^2+l-2kl+k-1)\alpha}{kl}.
% \]
\end{lemma}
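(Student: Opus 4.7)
The plan is to combine the JITP guarantee (Lemma~\ref{JITP}) with the two RPP-tour bounds (Lemmas~\ref{rpp1}, \ref{rpp2}), decompose every quantity tour-by-tour via Lemma~\ref{tem}, invoke Lemma~\ref{thebounds} on each optimal tour, and finally compress all per-tour parameters into a single pair $(\hat\alpha,\hat l)$ by a Cauchy--Schwarz identity. Substituting the RPP bounds into Lemma~\ref{JITP} gives $c(\T)\le\min\{A,B\}$, where $A=\tfrac{2}{k}\delta(E^*)+\tfrac{k-1}{k}\MST+\tfrac{k-1}{2k}\OPT$ and $B=\tfrac{2}{k}\delta(E^*)+\tfrac{k-1}{k}\OPT+\tfrac{2(k-1)}{k}\MST-\tfrac{2(k-1)}{k}c(E^*)$.

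For each $T\in\T^*$, let $\alpha^T,\{\alpha^T_i\}$ be the parameters from Lemma~\ref{thebounds} and define $l^T:=\alpha^T/\max_i\alpha^T_i\in[1,\infty)$. The elementary inequality $\sum_i i\alpha_i\ge\alpha(l+1)/2$ (saturated when mass $\alpha/l$ is placed uniformly on indices $1,\ldots,l$) turns Lemma~\ref{thebounds} into the clean per-tour bounds $\delta(E^*_T)/c(T)\le k/2-(l^T-1)\alpha^T/2$, $\MST_T/c(T)\le 1-\alpha^T/(2l^T)$, and $c(E^*_T)/c(T)=1-\alpha^T$. Setting $w_T:=c(T)/\OPT$ (so $\sum_T w_T=1$) and introducing the aggregates $\bar\alpha:=\sum_T w_T\alpha^T$, $\bar\sigma:=\sum_T w_T\alpha^T l^T$, and $\bar\mu:=\sum_T w_T\alpha^T/l^T$, applying Lemma~\ref{tem} and collecting terms yields $A/\OPT\le\tfrac{5k-3}{2k}-\tfrac{\bar\sigma-\bar\alpha}{k}-\tfrac{(k-1)\bar\mu}{2k}$ and $B/\OPT\le\tfrac{2k-1}{k}+\tfrac{(2k-1)\bar\alpha-\bar\sigma}{k}-\tfrac{(k-1)\bar\mu}{k}$.

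The pivotal step is a single use of Cauchy--Schwarz: taking $u_T=\sqrt{w_T l^T\alpha^T}$ and $v_T=\sqrt{w_T\alpha^T/l^T}$ gives $\bar\alpha^2=(\sum_T u_Tv_T)^2\le\bar\sigma\bar\mu$, i.e.\ $\bar\sigma\ge\bar\alpha^2/\bar\mu$. Setting $\hat\alpha:=\bar\alpha\in[0,1]$ and $\hat l:=\bar\alpha/\bar\mu\ge 1$ (the latter because each $l^T\ge 1$ forces $\bar\mu\le\bar\alpha$), a direct computation shows that this inequality is precisely what is required to collapse the two displayed bounds into $A/\OPT\le\tau(\hat\alpha,\hat l)$ and $B/\OPT\le\eta(\hat\alpha,\hat l)$. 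Therefore $c(\T)/\OPT\le\min\{\tau(\hat\alpha,\hat l),\eta(\hat\alpha,\hat l)\}\le\max_{\alpha,l}\min\{\tau(\alpha,l),\eta(\alpha,l)\}$, which is the claim.

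The hard part is spotting the right Cauchy--Schwarz pairing $(u_T,v_T)$ so that the single inequality $\bar\sigma\ge\bar\alpha^2/\bar\mu$ tightens both the $H_1$- and $H_2$-based bounds via the \emph{same} $(\hat\alpha,\hat l)$; once that pairing is identified the rest is mechanical arithmetic. A secondary technical wrinkle is that $\hat l=\bar\alpha/\bar\mu$ need not be an integer, while the lemma's outer maximum ranges over $l\in\mathbb{Z}_{\ge 1}$; since $\tau(\alpha,\cdot)$ and $\eta(\alpha,\cdot)$ are smooth on $[1,\infty)$, this is handled either by integer parameters built into Lemma~\ref{thebounds} or by rounding $\hat l$ to a neighbouring integer at a negligible cost that does not perturb the target $\tfrac{5}{2}-\Theta(1/\sqrt{k})$ approximation ratio.
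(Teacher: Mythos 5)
Your aggregation argument is essentially sound up to its last step, and it is a genuinely different route from the paper's. I checked the algebra: the per-tour bounds $\delta(E^*_T)/c(T)\le k/2-(l^T-1)\alpha^T/2$, $\MST_T/c(T)\le 1-\alpha^T/(2l^T)$ (with $l^T=\alpha^T/\max_i\alpha_i^T$), the inequality $\sum_i i\alpha_i\ge\alpha(l+1)/2$ (which indeed holds for real $l\ge1$), the two aggregate bounds in terms of $(\bar\alpha,\bar\sigma,\bar\mu)$, and the Cauchy--Schwarz step $\bar\sigma\bar\mu\ge\bar\alpha^2$ collapsing both bounds to $\tau(\hat\alpha,\hat l)$ and $\eta(\hat\alpha,\hat l)$ with $\hat\alpha=\bar\alpha$, $\hat l=\bar\alpha/\bar\mu$ are all correct. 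The paper instead pushes the $\min$ of the two RPP-tour bounds inside the sum over optimal tours and then optimizes a single parameter vector, reducing to integer $l$ by a piecewise-linearity argument in $\alpha_1$; your explicit treatment of the multi-tour aggregation is arguably more careful on that point.

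The genuine gap is the final step: $\hat l=\bar\alpha/\bar\mu$ is a real number, while the lemma's maximum ranges over $l\in\mathbb{Z}_{\ge1}$, and what you prove is only that the ratio is at most $\sup_{\alpha\in[0,1],\,l\in\mathbb{R}_{\ge1}}\min\{\tau(\alpha,l),\eta(\alpha,l)\}$, which is strictly weaker. For fixed $\alpha$ both $\tau(\alpha,\cdot)$ and $\eta(\alpha,\cdot)$ are concave in $l$, so their minimum at a non-integer $\hat l$ can strictly exceed its value at \emph{both} neighbouring integers; hence ``rounding $\hat l$ at negligible cost'' goes in the wrong direction, and Lemma~\ref{thebounds} contains no integrality for $\alpha^T/\max_i\alpha_i^T$ either. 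Concretely, for $k=5$ the lemma's integer-$l$ maximum is $\approx 2.086$ (at $l=2$, $\alpha=2/7$), whereas the real-$l$ relaxation has value $\approx 2.090$ (near $l\approx1.6$, $\alpha=l/(4l-1)$); so your chain does not yield the constants claimed in Lemma~\ref{opt} and used in Theorem~\ref{mainres} and Table~\ref{res}, even though the asymptotic $\frac52-\Theta(1/\sqrt{k})$ survives. To close the gap you cannot work only with the scalar summaries $(\bar\alpha,\bar\sigma,\bar\mu)$: they discard exactly the structure that forces integrality. One repair is to aggregate the full vectors, i.e.\ note that $f$ and $g$ in the paper's proof are concave in $\vec\alpha$ (linear terms minus a max), apply Jensen over the tours with weights $w_T$ to reduce to a single vector $\sum_T w_T\vec\alpha^{\,T}$, and only then invoke the single-vector optimization that pins the worst case at $\alpha/\alpha_1\in\mathbb{Z}_{\ge1}$.
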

\begin{proof}
Recall that our algorithm uses the RPP tour $\arg\min\{c(H_1),c(H_2)\}$ to call the JITP algorithm. Then, by Lemma~\ref{JITP}, the obtained solution, denoted by $\T$, satisfies that
\begin{equation}\label{upperbound}
\begin{split}
c(\T)&\leq \frac{2}{\capacity}\delta(E^*)+\frac{\capacity-1}{\capacity}\min\{c(H_1),c(H_2)\}\\
&\leq \frac{2}{\capacity}\delta(E^*)+\frac{\capacity-1}{\capacity}\min\lrC{\MST+\frac{1}{2}\OPT,\OPT+2\MST-2c(E^*)}\\
&\leq\sum_{T\in\T^*}\lrA{\frac{2}{\capacity}\delta(E^*_T)+\frac{\capacity-1}{\capacity}\min\lrC{\MST_T+\frac{1}{2}c(T),c(T)+2\MST_T-2c(E^*_T)}},
\end{split}
\end{equation}
where the second inequality follows from Lemmas~\ref{rpp1} and \ref{rpp2}, and the last inequality from Lemma~\ref{tem} and the fact that $\OPT=\sum_{T\in\T^*}c(T)$.

By Lemma~\ref{thebounds}, we have
\begin{equation}\label{userpp1}
\begin{split}
&\frac{2}{\capacity}\delta(E^*_T)+\frac{\capacity-1}{\capacity}\lrA{\MST_T+\frac{1}{2}c(T)}\\
&\leq\frac{2}{\capacity}\lrA{\frac{\capacity}{2}+\alpha-\sum_{i=1}^{m}i\alpha_i}c(T)+\frac{\capacity-1}{\capacity}\lrA{1-\max_{\substack{1\leq i\leq m}}\frac{1}{2}\alpha_{i}+\frac{1}{2}}c(T)\\
&=\frac{5\capacity-3}{2\capacity}c(T)+\frac{2}{\capacity}\lrA{\alpha-\sum_{i=1}^{m}i\alpha_i}c(T)-\lrA{\max_{\substack{1\leq i\leq m}}\frac{\capacity-1}{2\capacity}\alpha_{i}}c(T),
\end{split}
\end{equation}
and
\begin{equation}\label{userpp2}
\begin{split}
&\frac{2}{\capacity}\delta(E^*_T)+\frac{\capacity-1}{\capacity}\lrA{c(T)+2\MST_T-2c(E^*_T)}\\
&\leq\frac{2}{\capacity}\lrA{\frac{\capacity}{2}+\alpha-\sum_{i=1}^{m}i\alpha_i}c(T)+\frac{\capacity-1}{\capacity}\lrA{1+2-\max_{\substack{1\leq i\leq m}}\alpha_{i}-2(1-\alpha)}c(T)\\
&=\frac{2\capacity-1}{\capacity}c(T)+\frac{2}{\capacity}\lrA{\alpha-\sum_{i=1}^{m}i\alpha_i}c(T)+\frac{\capacity-1}{\capacity}\lrA{2\alpha-\max_{\substack{1\leq i\leq m}}\alpha_{i}}c(T)\\
&=\frac{2\capacity-1}{\capacity}c(T)+\lrA{2\alpha-\frac{2}{\capacity}\sum_{i=1}^{m}i\alpha_i}c(T)-\lrA{\max_{\substack{1\leq i\leq m}}\frac{\capacity-1}{\capacity}\alpha_{i}}c(T).
\end{split}
\end{equation}

Let
\begin{equation}\label{thef}
f(\vec{\alpha})=\frac{5\capacity-3}{2\capacity}+\frac{2}{\capacity}\lrA{\alpha-\sum_{i=1}^{m}i\alpha_i}-\lrA{\max_{\substack{1\leq i\leq m}}\frac{\capacity-1}{2\capacity}\alpha_{i}},
\end{equation}
and
\begin{equation}\label{theg}
g(\vec{\alpha})=\frac{2\capacity-1}{\capacity}+\lrA{2\alpha-\frac{2}{\capacity}\sum_{i=1}^{m}i\alpha_i}-\lrA{\max_{\substack{1\leq i\leq m}}\frac{\capacity-1}{\capacity}\alpha_{i}},
\end{equation}
where $\vec{\alpha}=(\alpha_1,\alpha_2,...,\alpha_m)$.

By (\ref{upperbound}), (\ref{userpp1}), (\ref{userpp2}), and  Lemma~\ref{thebounds}, the approximation ratio of our algorithm is at most
\begin{equation}\label{iratio1}
\max_{\substack{\alpha_1+\alpha_2+\cdots+\alpha_m = \alpha\\
\alpha_1,\alpha_2,\dots,\alpha_m\geq 0\\1\geq\alpha\geq0}}\min\lrC{f(\vec{\alpha}),g(\vec{\alpha})}.
\end{equation}

Under the three conditions (or constraints) in (\ref{iratio1}), both $f(\vec{\alpha})$ and $g(\vec{\alpha})$ achieve their maximum values only when $\alpha_1\geq\alpha_2\geq\cdots\geq\alpha_m$. As shown in~\citep{DBLP:conf/mfcs/zhao}, the reason is that if there exists $\alpha_{p}<\alpha_{q}$ for some $p<q$, then exchanging their values yields a larger solution: the value $\max_{\substack{1\leq i\leq m}}\alpha_{i}$ remain unchanged, while the coefficients of $\alpha_p$ and $\alpha_q$ satisfy $0>\frac{-2p}{k}>\frac{-2q}{k}$.

Therefore, the approximation ratio of our algorithm is at most
\begin{equation}\label{iratio2}
\begin{split}
\max_{\substack{\alpha_1+\alpha_2+\cdots+\alpha_m = \alpha\\
\alpha_1\geq\alpha_2\geq\cdots\geq\alpha_m\geq 0\\1\geq\alpha\geq0}}\min\lrC{f(\vec{\alpha}),g(\vec{\alpha})}.
\end{split}
\end{equation}

Fixing the value of $\alpha_1$, and under the three conditions in (\ref{iratio2}), maximizing $f(\vec{\alpha})$ (or $g(\vec{\alpha})$) is equivalent to minimizing $\sum_{i=1}^{m}i\alpha_i$. 
It is easy to see that each $\alpha_i$ should be set as large as possible, in order of increasing index. 
Specifically, when $\alpha-\sum_{j=1}^{i-1}\alpha_j\geq 0$, set $\alpha_i=\min\{\alpha-\sum_{j=1}^{i-1}\alpha_j,\alpha_1\}$. Then, we have $\alpha_1=\alpha_2=\cdots=\alpha_{l-1}\geq \alpha_l=\alpha-(l-1)\alpha_1\geq 0$ and $\alpha_{l+1}=\alpha_{l+2}=\cdots =\alpha_m=0$, where $l=\ceil{\frac{\alpha}{\alpha_1}}$. Therefore, we have $\sum_{i=1}^{m}i\alpha_i=\sum_{i=1}^{l-1}i\alpha_i+l\alpha_l=\frac{(l-1)l\alpha_1}{2}+l(\alpha-(l-1)\alpha_1)=l\alpha-\frac{(l-1)l\alpha_1}{2}$.

Let
\[
z(\alpha_1)=-l\alpha+\frac{(l-1)l\alpha_1}{2}-C_1\cdot\alpha_1,
\]
where $C_1>0$ denotes a constant.

When $\alpha_1\in[\frac{\alpha}{t+1}, \frac{\alpha}{t})$ with $t\in\mathbb{Z}_{\geq1}$, the function $z(\alpha_1)$ is a continuous linear function. Moreover, since $l=\ceil{\frac{\alpha}{\alpha_1}}=t+1$, we have
\begin{equation}\label{leftlim}
\lim_{\alpha_1\to (\frac{\alpha}{t})^-}z(\alpha_1)=-(t+1)\alpha+\frac{t(t+1)\cdot\frac{\alpha}{t}}{2}-C_1\cdot\frac{\alpha}{t}=-\frac{(t+1)\alpha}{2}-C_1\cdot\frac{\alpha}{t}.
\end{equation}
When $\alpha_1=\frac{\alpha}{t}$, since $l=\ceil{\frac{\alpha}{\alpha_1}}=t$, we have
\begin{equation}\label{lim}
z(\frac{\alpha}{t})=-t\alpha+\frac{(t-1)\alpha}{2}-C_1\cdot\frac{\alpha}{t}=-\frac{(t+1)\alpha}{2}-C_1\cdot\frac{\alpha}{t}.
\end{equation}

Therefore, by (\ref{leftlim}) and (\ref{lim}), the function $z(\alpha_1)$ is a continuous function when $\alpha_1\in(0,\alpha]$. Recall that $z(\alpha_1)$ is a continuous linear function when $\alpha_1\in[\frac{\alpha}{t+1}, \frac{\alpha}{t})$ with $t\in\mathbb{Z}_{\geq1}$. So, it is easy to observe that there exists at least one value of $\alpha_1$ maximizing $z(\alpha_1)$ such that $\frac{\alpha}{\alpha_1}$ is an integer.

Consequently, there exists at least one value of $\alpha_1$ maximizing $f(\vec{\alpha})$ (or $g(\vec{\alpha})$) such that $\frac{\alpha}{\alpha_1}$ is an integer.

Under $\frac{\alpha}{\alpha_1}=l\in\mathbb{Z}_{\geq1}$ and $\alpha_1=\alpha_2=\cdots=\alpha_l=\frac{\alpha}{\alpha_1}=\frac{\alpha}{l}$, as shown before, we have $\sum_{i=1}^{m}i\alpha_i=l\alpha-\frac{(l-1)l\alpha_1}{2}=l\alpha-\frac{(l-1)\alpha}{2}=\frac{(l+1)\alpha}{2}$. Then, by (\ref{thef}) and (\ref{theg}), we have
\begin{align*}
f(\vec{\alpha})&=\frac{5\capacity-3}{2\capacity}+\frac{2}{\capacity}\lrA{\alpha-\sum_{i=1}^{m}i\alpha_i}-\lrA{\max_{\substack{1\leq i\leq m}}\frac{\capacity-1}{2\capacity}\alpha_{i}}\\
&=\frac{5\capacity-3}{2\capacity}+\frac{2}{\capacity}\lrA{\alpha-\frac{(l+1)\alpha}{2}}-\frac{\capacity-1}{2\capacity}\cdot\frac{\alpha}{l}\\
&=\frac{5\capacity-3}{2\capacity}-\frac{(2l^2-2l+k-1)\alpha}{2kl},
\end{align*}
and
\begin{align*}
g(\vec{\alpha})&=\frac{2\capacity-1}{\capacity}+\lrA{2\alpha-\frac{2}{\capacity}\sum_{i=1}^{m}i\alpha_i}-\lrA{\max_{\substack{1\leq i\leq m}}\frac{\capacity-1}{\capacity}\alpha_{i}}\\
&=\frac{2\capacity-1}{\capacity}+\lrA{2\alpha-\frac{2}{\capacity}\cdot\frac{(l+1)\alpha}{2}}-\frac{\capacity-1}{\capacity}\cdot\frac{\alpha}{l}\\
&=\frac{2\capacity-1}{\capacity}-\frac{(k-1)\alpha+l(l+1)\alpha-2kl\alpha}{kl}\\
&=\frac{2\capacity-1}{\capacity}-\frac{(l^2+l-2kl+k-1)\alpha}{kl}.
\end{align*}

Let
\[
\tau(\alpha,l)=\frac{5\capacity-3}{2\capacity}-\frac{(2l^2-2l+k-1)\alpha}{2kl}\quad\text{and}\quad\eta(\alpha,l)=\frac{2\capacity-1}{\capacity}-\frac{(l^2+l-2kl+k-1)\alpha}{kl}.
\]
% and
% \[
% \eta(\alpha,l)=\frac{2\capacity-1}{\capacity}-\frac{(l^2+l-2kl+k-1)\alpha}{kl}.
% \]

Then, by (\ref{iratio2}), the approximation ratio of our algorithm is at most
\begin{equation}\label{iratio3}
\begin{split}
\max_{\substack{\alpha_1+\alpha_2+\cdots+\alpha_m = \alpha\\
\alpha_1\geq\alpha_2\geq\cdots\geq\alpha_m\geq 0\\1\geq\alpha\geq0}}\min\lrC{f(\vec{\alpha}),g(\vec{\alpha})}=\max_{\substack{l\in\mathbb{Z}_{\geq1}\\1\geq\alpha\geq0}}\min\lrC{\tau(\alpha,l),\eta(\alpha,l)},
\end{split}
\end{equation}
as desired.
\end{proof}

\begin{theorem}\label{mainres}
For unit-demand CARP, there is a polynomial-time algorithm with an approximation ratio of $\frac{5}{2}-\frac{2l^2+10l+k-4}{2k(4l-1)}$, where $l=\ceil{\frac{\sqrt{8k-7}-1}{4}}$.
\end{theorem}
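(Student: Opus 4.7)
The plan is to take Lemma~\ref{opt} as a black box and carry out its double optimization explicitly: an inner maximization over $\alpha$ for each fixed $L$, followed by an outer maximization over positive integers $L$. Set $R(L) := \frac{5}{2} - \frac{2L^2+10L+k-4}{2k(4L-1)}$ and $l = \ceil{\frac{\sqrt{8k-7}-1}{4}}$. The goal is to establish that $\max_{L \in \mathbb{Z}_{\ge 1},\,\alpha \in [0,1]} \min\{\tau(\alpha,L),\eta(\alpha,L)\} \le R(l)$.

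For the inner step, fix $L \ge 1$. Both $\tau(\cdot,L)$ and $\eta(\cdot,L)$ are affine in $\alpha$ with $\tau(0)-\eta(0) = \frac{k-1}{2k} > 0$. Setting $\tau(\alpha,L) = \eta(\alpha,L)$ and factoring a common $\frac{k-1}{2k}$ reduces to the linear equation $(4L-1)\alpha = L$, giving the unique crossing $\alpha^* = \frac{L}{4L-1} \in (0, 1/3]$, at which both expressions evaluate to $R(L)$. A short computation shows $\mathrm{slope}(\tau) - \mathrm{slope}(\eta) = -\frac{(k-1)(4L-1)}{2kL} < 0$, so $\tau$ decreases strictly faster than $\eta$. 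Hence, when $\eta$ is non-decreasing in $\alpha$ (equivalently $L^2 + L + k - 1 \le 2kL$), the map $\alpha \mapsto \min\{\tau,\eta\}$ rises from $\eta(0) = 2 - \frac{1}{k}$ to $R(L)$ on $[0,\alpha^*]$ and decreases thereafter, so the inner max equals $R(L)$; in the complementary regime both $\tau$ and $\eta$ are decreasing in $\alpha$, and the inner max equals $\eta(0) = 2 - \frac{1}{k}$. In either regime the inner max is $\max\{R(L),\, 2 - \tfrac{1}{k}\}$.

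For the outer step, define $h(L) := \frac{2L^2+10L+k-4}{4L-1}$, so that $R(L) = \frac{5}{2} - \frac{h(L)}{2k}$. A direct expansion gives
\[
h(L+1) - h(L) \;=\; \frac{4\,(2L^2 + L - k + 1)}{(4L-1)(4L+3)},
\]
whose sign changes at $L^\circ := \frac{\sqrt{8k-7}-1}{4}$, the positive root of $2L^2 + L - k + 1 = 0$. Hence $h$ is non-increasing on the positive integers $L \le \lceil L^\circ \rceil$ and non-decreasing on $L \ge \lceil L^\circ \rceil$, so $l = \lceil L^\circ \rceil$ minimizes $h$ over $\mathbb{Z}_{\ge 1}$ and yields $\max_L R(L) = R(l)$. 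A final routine check reduces $R(l) \ge 2 - \frac{1}{k}$ to $l(2k - l - 1) \ge k - 1$, which is immediate for our $l = \Theta(\sqrt{k})$ and $k \ge 3$.

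Combining the inner and outer bounds gives $\max_{L,\alpha} \min\{\tau,\eta\} \le R(l)$, which by Lemma~\ref{opt} is the claimed ratio. The main obstacle I anticipate is the inner case analysis, specifically confirming that the crossing $\alpha^*$ always lies in the interior of $[0,1]$, that the two regimes of $\eta$ partition the $L$-axis in the way described, and that the complementary regime never exceeds $R(l)$; everything else reduces to careful but routine algebra around $h(L+1) - h(L)$ and the inequality $R(l) \ge 2 - 1/k$.
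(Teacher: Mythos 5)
Your proposal is correct and follows essentially the same route as the paper: starting from Lemma~\ref{opt}, it exploits linearity in $\alpha$ and the crossing point $\alpha^*=\frac{L}{4L-1}$ (where both bounds equal $R(L)$), maximizes $R(L)$ over integers via the sign of the consecutive difference, whose root $\frac{\sqrt{8k-7}-1}{4}$ gives $l=\lceil\frac{\sqrt{8k-7}-1}{4}\rceil$, and finally compares with the $\alpha=0$ value $2-\frac{1}{k}$, exactly as in the paper's Cases 1--2. Your only (valid) streamlining is to note that $\tau$ is non-increasing in $\alpha$ (since $2L^2-2L+k-1\ge 0$), which lets you discard the $\alpha=1$ endpoint and skip the paper's separate maximization of $\tau(1,l)$ over $l$ (the $\widehat{l}=\lceil\frac{\sqrt{2k-1}-1}{2}\rceil$ computation).
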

\begin{proof}
It is clear that our algorithm takes $O(n^3)$ time.

Next, we compute the approximation ratio given by Lemma~\ref{opt}.

Recall that $\tau(\alpha,l)=\frac{5\capacity-3}{2\capacity}-\frac{(2l^2-2l+k-1)\alpha}{2kl}$ and $\eta(\alpha,l)=\frac{2\capacity-1}{\capacity}-\frac{(l^2+l-2kl+k-1)\alpha}{kl}$. Therefore, when $\alpha\leq\frac{l}{4l-1}$, we have $\eta(\alpha,l)\leq \tau(\alpha,l)$; otherwise, we have $\eta(\alpha,l)\geq \tau(\alpha,l)$.

\textbf{Case~1: $0\leq\alpha\leq\frac{l}{4l-1}$.} In this case, by Lemma~\ref{opt}, the approximation ratio of our algorithm is at most $\max_{\substack{l\in\mathbb{Z}_{\geq1},\frac{l}{4l-1}\geq\alpha\geq0}}\eta(\alpha,l)$. Since $\eta(\alpha,l)$ is a linear function w.r.t.\ $\alpha$, we have
\begin{equation}\label{1case1}
\max_{\substack{l\in\mathbb{Z}_{\geq1}\\ \frac{l}{4l-1}\geq\alpha\geq0}}\eta(\alpha,l)=\max_{l\in\mathbb{Z}_{\geq1}}\max\lrC{\eta(0,l),\eta(\frac{l}{4l-1},l)}.
\end{equation}

We have $\eta(0,l)=\frac{2\capacity-1}{\capacity}$. Next, we consider $\max_{l\in\mathbb{Z}_{\geq1}}\eta(\frac{l}{4l-1},l)$. 

First, we have
\begin{equation}\label{1case2}
\eta(\frac{l}{4l-1},l)=\frac{2\capacity-1}{\capacity}-\frac{l^2+l-2\capacity l+\capacity-1}{\capacity(4l-1)}.
\end{equation}
Then, we obtain
\begin{equation}\label{1case3}
\begin{split}
\frac{\mathrm{d}}{\mathrm{d}l}{\eta(\frac{l}{4l-1},l)}&=\frac{ -\capacity(2l + 1 - 2\capacity)(4l - 1) + 4\capacity(l^2 + l - 2\capacity l + \capacity - 1) }{\capacity^2(4l - 1)^2}=\frac{-4l^2+2l+2\capacity-3}{\capacity(4l - 1)^2}.
\end{split}
\end{equation}

Therefore, by (\ref{1case3}), when $\capacity\geq3$ and $l\geq1$, $\eta(\frac{l}{4l-1},l)$ is a concave function w.r.t.\ $l$. 

Consequently, we have 
\begin{equation}\label{1case4}
\max_{l\in\mathbb{Z}_{\geq1}}\eta(\frac{l}{4l-1},l)=\eta(\frac{\widetilde{l}}{4\widetilde{l}-1},\widetilde{l}),
\end{equation}
where 
\begin{equation}\label{1case5}
\widetilde{l}=\arg\min_{l\in\mathbb{Z}_{\geq1}}\lrC{\eta(\frac{l}{4l-1},l)-\eta(\frac{(l+1)}{4(l+1)-1},(l+1))\geq 0}.
\end{equation}

Note that
\begin{equation}\label{1case6}
\begin{split}
&\eta(\frac{l}{4l-1},l)-\eta(\frac{(l+1)}{4(l+1)-1},(l+1))\\
&=\lrA{\frac{(l+1)^2+(l+1)-2\capacity (l+1)+\capacity-1}{\capacity(4l+3)}}-\lrA{\frac{l^2+l-2\capacity l+\capacity-1}{\capacity(4l-1)}}\\
&=\frac{((l+1)^2+(l+1)-2\capacity (l+1)+\capacity-1)(4l-1)-(l^2+l-2\capacity l+\capacity-1)(4l+3)}{\capacity(4l-1)(4l+3)}\\
% &=\frac{(l^2+3l-2\capacity l+1-\capacity)(4l-1)-(l^2+l-2\capacity l+\capacity-1)(4l+3)}{\capacity(4l-1)(4l+3)}\\
% &=\frac{(4l^3+11l^2-8\capacity l^2+l-2\capacity l+\capacity-1)-(4l^3+7l^2-8\capacity l^2-2\capacity l-l + 3\capacity-3)}{\capacity(4l-1)(4l+3)}\\
&=\frac{2(2l^2+l-\capacity+1)}{\capacity(4l-1)(4l+3)}.
\end{split}
\end{equation}

By (\ref{1case6}), when $\eta(\frac{l}{4l-1},l)-\eta(\frac{(l+1)}{4(l+1)-1},(l+1))=0$ and $l\geq 1$, we have $l=\frac{\sqrt{8k-7}-1}{4}$. Then, by (\ref{1case5}), we have $\widetilde{l}=\ceil{\frac{\sqrt{8k-7}-1}{4}}$.
Then, by (\ref{1case2}) and (\ref{1case4}), we have 
\begin{equation}\label{1case7}
\max_{l\in\mathbb{Z}_{\geq1}}\eta(\frac{l}{4l-1},l)=\frac{2\capacity-1}{\capacity}-\frac{\widetilde{l}^2+\widetilde{l}-2\capacity \widetilde{l}+\capacity-1}{\capacity(4\widetilde{l}-1)},\quad\text{where}\quad \widetilde{l}=\Ceil{\frac{\sqrt{8k-7}-1}{4}}.
\end{equation}

Recall that $\max_{l\in\mathbb{Z}_{\geq1}}\eta(0,l)=\frac{2\capacity-1}{\capacity}$. It can be verified that $\frac{2\capacity-1}{\capacity}<\frac{2\capacity-1}{\capacity}-\frac{\widetilde{l}^2+\widetilde{l}-2\capacity \widetilde{l}+\capacity-1}{\capacity(4\widetilde{l}-1)}$ for all $\capacity\geq3$. 
Therefore, in the case where $0\leq\alpha\leq\frac{l}{4l-1}$, by (\ref{1case1}), the approximation ratio of our algorithm is at most $\frac{2\capacity-1}{\capacity}-\frac{\widetilde{l}^2+\widetilde{l}-2\capacity \widetilde{l}+\capacity-1}{\capacity(4\widetilde{l}-1)}$, where $\widetilde{l}=\ceil{\frac{\sqrt{8k-7}-1}{4}}$.

\textbf{Case~2: $1\geq\alpha\geq \frac{l}{4l-1}$.} In this case, by Lemma~\ref{opt}, the approximation ratio of our algorithm is at most $\max_{\substack{l\in\mathbb{Z}_{\geq1},1\geq\alpha\geq\frac{l}{4l-1}}}\tau(\alpha,l)$. Since $\tau(\alpha,l)$ is a linear function w.r.t.\ $\alpha$, we have
\begin{equation}\label{2case1}
\max_{\substack{l\in\mathbb{Z}_{\geq1}\\1\geq\alpha\geq\frac{l}{4l-1}}}\tau(\alpha,l)=\max_{l\in\mathbb{Z}_{\geq1}}\max\lrC{\tau(\frac{l}{4l-1},l),\tau(1,l)}.
\end{equation}

Since $\tau(\frac{l}{4l-1},l)=\eta(\frac{l}{4l-1},l)$, by the previous analysis, we have
\begin{equation}\label{1case11}
\max_{l\in\mathbb{Z}_{\geq1}}\tau(\frac{l}{4l-1},l)=\frac{2\capacity-1}{\capacity}-\frac{\widetilde{l}^2+\widetilde{l}-2\capacity \widetilde{l}+\capacity-1}{\capacity(4\widetilde{l}-1)},\quad\text{where}\quad \widetilde{l}=\Ceil{\frac{\sqrt{8k-7}-1}{4}}.
\end{equation}

Next, we consider $\max_{l\in\mathbb{Z}_{\geq1}}\tau(1,l)$. 

First, we have
\begin{equation}\label{2case2}
\tau(1,l)=\frac{5\capacity-3}{2\capacity}-\frac{2l^2-2l+k-1}{2kl}.
\end{equation}
Then, we obtain
\begin{equation}\label{2case3}
\begin{split}
\frac{\mathrm{d}}{\mathrm{d}l}{\tau(1,l)}&=\frac{ -2\capacity l(4l - 2) + 2\capacity(2l^2 - 2l + \capacity - 1) }{4\capacity^2l^2}=\frac{-2l^2+\capacity-1}{2\capacity^2l^2}.
\end{split}
\end{equation}

Therefore, by (\ref{2case3}), when $\capacity\geq3$ and $l\geq1$, $\tau(1,l)$ is a concave function w.r.t.\ $l$. 

Consequently, we have 
\begin{equation}\label{2case4}
\max_{l\in\mathbb{Z}_{\geq1}}\tau(1,l)=\tau(1,\widehat{l}),
\end{equation}
where 
\begin{equation}\label{2case5}
\widehat{l}=\arg\min_{l\in\mathbb{Z}_{\geq1}}\lrC{\tau(1,l)-\tau(1,(l+1))\geq 0}.
\end{equation}

Note that
\begin{equation}\label{2case6}
\begin{split}
&\tau(1,l)-\tau(1,(l+1))\\
&=\lrA{\frac{2(l+1)^2-2(l+1)+\capacity-1}{2\capacity(l+1)}}-\lrA{\frac{2l^2-2l+k-1}{2kl}}\\
&=\frac{(2(l+1)^2-2(l+1)+\capacity-1)l-(2l^2-2l+k-1)(l+1)}{2\capacity l(l+1)}\\
&=\frac{(2l^2+2l+\capacity-1)l-(2l^2-2l+k-1)(l+1)}{2\capacity l(l+1)}\\
&=\frac{(2l^3+2l^2+\capacity l-l)-(2l^3 +\capacity l-3l+k-1)}{2\capacity l(l+1)}\\
&=\frac{2l^2+2l-\capacity+1}{2\capacity l(l+1)}.
\end{split}
\end{equation}

By (\ref{2case6}), when $\tau(1,l)-\tau(1,(l+1))=0$ and $l\geq 1$, we have $l=\frac{\sqrt{2k-1}-1}{2}$. Then, by (\ref{2case5}), we have $\widehat{l}=\ceil{\frac{\sqrt{2k-1}-1}{2}}$.
Then, by (\ref{2case2}) and (\ref{2case4}), we have 
\begin{equation}\label{2case7}
\max_{l\in\mathbb{Z}_{\geq1}}\tau(1,l)=\frac{5\capacity-3}{2\capacity}-\frac{2\widehat{l}^2-2\widehat{l}+k-1}{2k\widehat{l}},\quad\text{where}\quad \widehat{l}=\Ceil{\frac{\sqrt{2k-1}-1}{2}}.
\end{equation}

Recall that $\max_{l\in\mathbb{Z}_{\geq1}}\tau(\frac{l}{4l-1},l)=\frac{2\capacity-1}{\capacity}-\frac{\widetilde{l}^2+\widetilde{l}-2\capacity \widetilde{l}+\capacity-1}{\capacity(4\widetilde{l}-1)}$, where $\widetilde{l}=\ceil{\frac{\sqrt{8k-7}-1}{4}}$. 
It can be verified that $\frac{5\capacity-3}{2\capacity}-\frac{2\widehat{l}^2-2\widehat{l}+k-1}{2k\widehat{l}}<\frac{2\capacity-1}{\capacity}-\frac{\widetilde{l}^2+\widetilde{l}-2\capacity \widetilde{l}+\capacity-1}{\capacity(4\widetilde{l}-1)}$ for all $\capacity\geq 3$.
Thus, in the case where $1\geq\alpha\geq\frac{l}{4l-1}$, by (\ref{2case1}), the approximation ratio of our algorithm is at most $\frac{2\capacity-1}{\capacity}-\frac{\widetilde{l}^2+\widetilde{l}-2\capacity \widetilde{l}+\capacity-1}{\capacity(4\widetilde{l}-1)}$, where $\widetilde{l}=\ceil{\frac{\sqrt{8k-7}-1}{4}}$.

In conclusion, the approximation ratio of our algorithm is at most
\[
\frac{2\capacity-1}{\capacity}-\frac{\widetilde{l}^2+\widetilde{l}-2\capacity \widetilde{l}+\capacity-1}{\capacity(4\widetilde{l}-1)}=\frac{5}{2}-\frac{2\widetilde{l}^2+10\widetilde{l}+\capacity-4}{2\capacity(4\widetilde{l}-1)}=\frac{5}{2}-\Theta\lrA{\frac{1}{\sqrt{k}}},
\]
where $\widetilde{l}=\ceil{\frac{\sqrt{8k-7}-1}{4}}$.
\end{proof}

The approximation ratio in Theorem~\ref{mainres} is derived using two RPP tours $H_1$ and $H_2$ in $G$. Interestingly, it can be shown that the approximation ratio can be achieved using only the RPP tour $H_1$ in $G$ (see Section~\ref{onlyone}).

\begin{remark}
CVRP can be viewed as a special case of CARP by replacing each customer with a zero-cost edge. Thus, CARP with $\alpha=1$ capture the CVRP.
In fact, in the proof of Theorem~\ref{mainres}, the ratio 
%$\frac{5\capacity-3}{2\capacity}-\frac{2\widehat{l}^2-2\widehat{l}+k-1}{2k\widehat{l}}$, where $\widehat{l}=\Ceil{\frac{\sqrt{2k-1}-1}{2}}$,
\[
\frac{5\capacity-3}{2\capacity}-\frac{2\widehat{l}^2-2\widehat{l}+k-1}{2k\widehat{l}},\quad\text{where}\quad \widehat{l}=\Ceil{\frac{\sqrt{2k-1}-1}{2}},
\]
obtained in the case $\alpha=1$, exactly matches the approximation ratio for CVRP given in~\citep{DBLP:conf/mfcs/zhao}.
\end{remark}

\section{Proof of Lemma~\ref{thebounds}}\label{theproof}
\begingroup
\def\thelemma{\ref{thebounds}}
\begin{lemma}
For any tour $T\in\T^*$, there exist parameters $\{\alpha_1,\alpha_2,...,\alpha_m\}$ such that
\begin{enumerate}
    \item[(1)] $\delta(E^*_T)\leq\lrA{\frac{\capacity}{2}+\alpha-\sum_{i=1}^{m}i\alpha_i}c(T)$;
    \item[(2)] $\MST_T\leq\lrA{1-\max_{\substack{1\leq i\leq m}}\frac{1}{2}\alpha_{i}}c(T)$;
    \item[(3)] $c(E^*_T)= (1-\alpha)c(T)$;
    \item[(4)] $\sum^m_{i=1}\alpha_i=\alpha\leq 1$, where $m=\infty$ and  $0\leq\alpha_i\leq \alpha$ for each $1\leq i\leq m$.
\end{enumerate}
\end{lemma}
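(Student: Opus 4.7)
The plan is to take condition (3) as a definition by setting $\alpha := 1 - c(E^*_T)/c(T)\in[0,1]$, which automatically yields (3) and the bound $\alpha\leq 1$ in (4), and which fixes $\alpha\, c(T)$ to be the total cost of the $t+1$ non-customer edges of $T=(v_0,v_1,\ldots,v_{2t},v_0)$, namely $(v_0,v_1),(v_2,v_3),\ldots,(v_{2t-2},v_{2t-1}),(v_{2t},v_0)$, where $t\leq k$. The remaining parameters $(\alpha_i)_{i\geq 1}$ will be obtained by a sorted decomposition of these non-customer edge costs, adapted from the CVRP construction of \citep{DBLP:conf/mfcs/zhao}; the tail of the sequence is padded with zeros, which makes the range $m=\infty$ vacuous and enforces $\sum_i\alpha_i=\alpha$ in (4).

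For (1), the starting identity is
\[
\delta(E^*_T) \;=\; \tfrac12\, c(E^*_T) + \tfrac12 \sum_{i=1}^t \bigl(c(v_0,v_{2i-1})+c(v_0,v_{2i})\bigr).
\]
A clean baseline comes from $c(v_0,v_{2i-1})+c(v_{2i-1},v_{2i})+c(v_{2i},v_0)\leq c(T)$: the walk from $v_0$ to $v_{2i-1}$ along one arc of $T$, then across the customer edge to $v_{2i}$, then back to $v_0$ along the remaining arc, has length exactly $c(T)$, and shortcutting via the triangle inequality gives the stated bound. Summing over $i$ and using $t\leq k$ yields $\delta(E^*_T)\leq \tfrac{t}{2}c(T)\leq \tfrac{k}{2}c(T)$, which matches the degenerate choice $\alpha_1=\alpha$, $\alpha_i=0$ for $i\geq 2$. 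The refined form in (1) comes from replacing this coarse shortcut with the tighter vertex-by-vertex bound $c(v_0,v_j)\leq \min(d_j^+,d_j^-)$, where $d_j^{\pm}$ are the two tour-distances from $v_0$ to $v_j$, and then tracking how each non-customer edge contributes to the sum of these minima; the $\alpha_i$ are defined so that $\sum_i i\alpha_i$ collects exactly the resulting edge-weighted saving.

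For (2), I would construct an explicit spanning tree of $G[V_T]$ containing $E^*_T$ by deleting a non-customer edge $f$ of $T$: the resulting Hamiltonian path keeps all customer edges and spans $V_T$ with cost $c(T)-c(f)$. Maximizing over $f$ gives $\MST_T\leq c(T)-\max\{c(f):f\text{ non-customer edge of }T\}$. The sorted decomposition of $(\alpha_i)$ is then chosen so that $\tfrac12\max_i\alpha_i\cdot c(T)\leq \max_f c(f)$, which yields (2); the factor $\tfrac12$ absorbs the slack that arises when a maximum block of the decomposition aggregates two paired non-customer edges, only one of which can be removed in a single tree-from-tour step.

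The main obstacle is that (1) and (2) pull the decomposition in opposite directions: (1) is tightened by making $\sum_i i\alpha_i$ large, which favors mass at small indices of a sorted sequence, while (2) caps $\max_i\alpha_i$ from above, which favors spreading mass across many coordinates. The sorted-block argument of \citep{DBLP:conf/mfcs/zhao} navigates exactly this tension in the CVRP setting ($\alpha=1$), and the main technical work here is to carry it over to CARP by threading the additional customer-edge-cost term $(1-\alpha)c(T)$ through both refined inequalities without breaking the cancellations that produce the coefficients $\tfrac{k}{2}+\alpha-\sum_i i\alpha_i$ in (1) and $1-\tfrac12\max_i\alpha_i$ in (2).
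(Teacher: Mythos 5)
There is a genuine gap: you never actually define the parameters $\alpha_i$, and the one definitional hint you do give --- a ``sorted decomposition of the non-customer edge costs'' --- is not the construction that makes property (1) provable. In the paper's proof the index $i$ of $\alpha_i$ is \emph{positional}, not magnitude-based: writing $T=(v_0,v_1,\dots,v_{2l},v_0)$, the paper sets $\alpha_i$ to be (up to normalization by $c(T)$) the cost of the $i$-th non-customer edge encountered walking clockwise from the depot \emph{plus} the $i$-th non-customer edge encountered counterclockwise, and analogously pairs customer edges into $\beta_i$. This positional pairing is exactly what drives (1): by the triangle inequality, $c(v_0,v_{2i-1})+c(v_0,v_{2l+2-2i})\leq\lrA{\sum_{j\leq i}\alpha_j+\sum_{j<i}\beta_j}c(T)$, and summing these prefix bounds over the customer pairs produces coefficients $2(t-j)$ on $(\alpha_j+\beta_j)$, which after using $\sum_j(\alpha_j+\beta_j)=1$, $l\leq k$, and $\beta_j\geq0$ collapses to $2\delta(E^*_T)\leq\lrA{k+2\alpha-\sum_j 2j\alpha_j}c(T)$. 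Your sketch for (1) (``track how each non-customer edge contributes \dots the $\alpha_i$ are defined so that $\sum_i i\alpha_i$ collects exactly the resulting saving'') is circular as written: the lemma's whole content is exhibiting one fixed vector $\vec{\alpha}$ for which (1), (2), (3), (4) hold \emph{simultaneously}, and you defer precisely that construction and verification (``the main technical work'') rather than carrying it out.

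A secondary point: if you first define the blocks positionally as above, then re-sorting them in decreasing order would in fact be harmless, since permuting to a decreasing order can only decrease $\sum_i i\alpha_i$ (weakening the right-hand side of (1)) while (2)--(4) depend only on the multiset of values; but you would need to state this rearrangement observation explicitly, and sorting \emph{individual} edge costs before pairing, as your opening paragraph suggests, destroys the prefix-distance argument because the index then no longer reflects distance along the tour from the depot. Your treatment of (2) is essentially right and matches the paper (delete the larger edge of the pair realizing $\max_i\alpha_i$, giving $\MST_T\leq c(T)-\frac{1}{2}\max_i\alpha_i\,c(T)$), as are the identity for $2\delta(E^*_T)$, the crude bound $\delta(E^*_T)\leq\frac{k}{2}c(T)$, and your diagnosis of the tension between (1) and (2); but without the explicit pairing and the prefix-sum computation, the lemma is not proved.
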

\endgroup
\begin{proof}
We assume w.l.o.g.\ that $c(T)>0$; otherwise, the lemma holds trivially.

Let $T=(v_0,v_1,v_2,...,v_{2l-1},v_{2l},v_0)$ and $t=\ceil{\frac{l+1}{2}}$. Then, we have $l\leq k$, and $t\leq m$. Note that we also have $E^*_T=\{(v_{2i-1},v_{2i})\mid i\in[l]\}$ and $\size{E^*_T}=l$.

To construct the parameters $\{\alpha_1,\alpha_2,...,\alpha_m\}$, we set $\alpha_i=0$ for all $t<i\leq m$. Then, it remains to show how to set $\alpha_i$ for each $i\in[t]$. 
We consider two cases based on the parity of $l$, i.e., $\size{E^*_T}$.

\textbf{Case~1: $\size{E^*_T}$ is odd.}
In this case, we have $t=\ceil{\frac{l+1}{2}}=\frac{l+1}{2}$.
We let
\[
\alpha_{i}=\frac{c(v_{2i-2},v_{2i-1})+c(v_{2l+2-2i},v_{(2l+3-2i)\bmod (2l+1)})}{c(T)},\quad\forall i\in[t],
\]
and
\[
\beta_{i}=\frac{c(v_{2i-1},v_{2i})+c(v_{2l+1-2i},v_{2l+2-2i})}{c(T)},\quad\forall i\in [t-1].
\]
Also, we let $\beta_t=\frac{c(v_{2t-1},v_{2t})}{c(T)}$. Then, we have $\alpha_i,\beta_i\geq0$ for all $i\in[t]$.

Note that $c(E(T)\setminus E^*_T)=\sum_{i=1}^t\alpha_ic(T)$ and $c(E^*_T)=\sum_{i=1}^t\beta_ic(T)$.
Then, $\sum_{i=1}^{t}(\alpha_i+\beta_i)=1$. 
Since $\alpha_i=0$ for any $t<i\leq m$, we have $\alpha=\sum^m_{i=1}\alpha_i=\sum_{i=1}^{t}\alpha_i=1-\sum_{i=1}^{t}\beta_i\leq 1$. 

Therefore, the properties (3) and (4) in Lemma~\ref{thebounds} are both satisfied. 

Next, we prove the properties (1) and (2) in Lemma~\ref{thebounds}.

\begin{claim}\label{claim1}
When $\size{E^*_T}$ is odd, it holds that $\MST_T\leq\lrA{1-\max_{\substack{1\leq i\leq m}}\frac{1}{2}\alpha_{i}}c(T)$.
\end{claim}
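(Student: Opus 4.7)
The plan is to exhibit a concrete spanning tree of $G[V_T]$ that contains $E^*_T$, constructed by deleting a single carefully chosen edge from the Hamiltonian cycle $T$, and then bound its cost by a simple averaging argument.

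First, I would observe that $T = (v_0, v_1, \ldots, v_{2l}, v_0)$ is a Hamiltonian cycle on the vertex set $V_T$ consisting of $2l+1$ edges in total: the $l$ customer edges forming $E^*_T$, together with $l+1$ connection edges, namely $(v_0, v_1)$, $(v_{2i}, v_{2i+1})$ for $i \in [l-1]$, and $(v_{2l}, v_0)$. Removing any one of these $l+1$ connection edges from $T$ yields a Hamiltonian path on $V_T$, which is a spanning tree of $G[V_T]$ that still contains every edge of $E^*_T$, and whose cost equals $c(T)$ minus the cost of the removed edge.

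Second, I would read off the structure of the parameters $\alpha_i$ from the definition in Case~1: the quantity $\alpha_i c(T)$ is precisely the sum of the costs of two connection edges, paired symmetrically as $\{(v_0,v_1),(v_{2l},v_0)\}$ for $i=1$, $\{(v_{2i-2},v_{2i-1}),(v_{2l+2-2i},v_{2l+3-2i})\}$ for $2 \le i < t$, and $\{(v_{l-1},v_l),(v_{l+1},v_{l+2})\}$ for $i = t = (l+1)/2$. Since $l$ is odd, the $l+1$ connection edges are partitioned into exactly $t$ such pairs, with all pairs involving two distinct edges.

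The key step is then a one-line averaging argument. Let $i^\star$ attain $\max_{1 \le i \le m} \alpha_i$, and let $e^\star$ be the more expensive of the two connection edges in the pair defining $\alpha_{i^\star}$; then $c(e^\star) \geq \tfrac{1}{2}\alpha_{i^\star} c(T) = \tfrac{1}{2}\bigl(\max_{1 \le i \le m} \alpha_i\bigr) c(T)$. Removing $e^\star$ from $T$ gives a spanning tree of $G[V_T]$ containing $E^*_T$ of cost at most $c(T) - c(e^\star) \leq \bigl(1 - \tfrac{1}{2}\max_{1 \le i \le m} \alpha_i\bigr) c(T)$, which upper-bounds $\MST_T$ and proves the claim.

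I do not foresee a genuine obstacle here: the argument is essentially a choice of the ``heaviest edge in the heaviest pair'' followed by standard cycle-to-tree deletion. The only thing one must check carefully is the bookkeeping of the pairing, in particular that the two edges in each pair are distinct (which is exactly where the hypothesis that $\size{E^*_T}$ is odd is used, so that $l+1$ is even and all $t$ pairs are genuine 2-element sets), and that $e^\star$ is indeed a connection edge rather than a customer edge so that deleting it preserves $E^*_T$ in the resulting tree.
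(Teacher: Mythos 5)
Your proof is correct and follows essentially the same route as the paper: both delete a single heavy non-customer (connection) edge from the cycle $T$ to obtain a spanning tree of $G[V_T]$ containing $E^*_T$, and both lower-bound that edge's cost by half of the maximum pair-sum $\alpha_{i}c(T)$. The only cosmetic difference is that the paper deletes the globally most expensive edge of $E(T)\setminus E^*_T$ and then compares it to the pairwise averages, while you pick the heavier edge within the maximizing pair directly; the estimates are identical.
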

\begin{proof}[Claim Proof]
Recall that $\MST_T$ measures the cost of a minimum-cost spanning tree in $G[V_T]$ containing all edges in $E^*_T$.
Deleting an arbitrary edge in $E(T)\setminus E^*_T$ from $T$ forms a spanning tree in $G[V_T]$ containing all edges in $E^*_T$. Then, we have 
\begin{align*}
\MST_T&\leq c(T)-\max_{e\in E(T)\setminus E^*_T}c(e)\\
&= c(T)-\max_{1\leq i\leq 2t}c(v_{2i-2},v_{(2i-1)\bmod (2l+1)})\\
&\leq c(T)-\max_{\substack{1\leq i\leq t}}\frac{1}{2}[c(v_{2i-2},v_{2i-1})+c(v_{2l+2-2i},v_{(2l+3-2i)\bmod (2l+1)})]\\
&=\lrA{1-\max_{\substack{1\leq i\leq t}}\frac{1}{2}\alpha_{i}}c(T)\\
&=\lrA{1-\max_{\substack{1\leq i\leq m}}\frac{1}{2}\alpha_{i}}c(T),
\end{align*}
where the first equality follows from $E(T)\setminus E^*_T=\{(v_{2i-2},v_{(2i-1)\bmod (2l+1)})\mid i\in[2t]\}$, the second inequality from $E(T)\setminus E^*_T=\{(v_{2i-2},v_{2i-1}),(v_{2i-2},v_{(2i-1)\bmod (2l+1)})\mid i\in[t]\}$, and the last inequality from the definition of $\alpha_i$.
\end{proof}

\begin{claim}\label{claim2}
When $\size{E^*_T}$ is odd, it holds that $\delta(E^*_T)\leq\lrA{\frac{\capacity}{2}+\alpha-\sum_{i=1}^{m}i\alpha_i}c(T)$.
\end{claim}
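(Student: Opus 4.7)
The plan is to reduce the claim to bounding the ``radial'' sum $S := \sum_{i=1}^{l}\lrB{c(v_0,v_{2i-1}) + c(v_0,v_{2i})}$. Since property (3) gives $c(E^*_T) = (1-\alpha)c(T)$, we have $\delta(E^*_T) = \tfrac{1}{2}c(E^*_T) + \tfrac{1}{2}S$, so it suffices to prove $S \leq \lrA{k-1+3\alpha-2\sum_{i=1}^{m}i\alpha_i}c(T)$. For each customer $i \in [l]$ let $F_i$ (resp.\ $B_i$) denote the forward (resp.\ backward) distance along $T$ from $v_0$ to $v_{2i-1}$ (resp.\ $v_{2i}$). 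The triangle inequality yields the three basic bounds $c(v_0,v_{2i-1}) + c(v_0,v_{2i}) \leq 2F_i + c(v_{2i-1},v_{2i})$, $\leq 2B_i + c(v_{2i-1},v_{2i})$, and $\leq c(T) - c(v_{2i-1},v_{2i})$ (the last from $F_i + B_i = c(T) - c(v_{2i-1},v_{2i})$).

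The core step is a symmetric pairing argument. For each $i \in [t-1]$ I will pair customer $i$ with the mirror customer $l+1-i$, applying the forward bound to the former and the backward bound to the latter. Because $\alpha_j$ and $\beta_j$ are by construction the normalized sums of the $j$-th non-customer and customer edges from the two sides of $v_0$, the individual $2F_i$, $c(v_{2i-1},v_{2i})$, $2B_{l+1-i}$, $c(v_{2l+1-2i},v_{2l+2-2i})$ contributions reorganize cleanly and the pair sum telescopes to
\[ c(v_0,v_{2i-1}) + c(v_0,v_{2i}) + c(v_0,v_{2l+1-2i}) + c(v_0,v_{2l+2-2i}) \leq \lrB{2\sum_{j=1}^{i}\alpha_j + 2\sum_{j=1}^{i-1}\beta_j + \beta_i}c(T). \]
For the unique middle customer $i = t$ (present because $|E^*_T|$ is odd) I will use the third basic bound: $c(v_0,v_{2t-1}) + c(v_0,v_{2t}) \leq F_t + B_t = (1-\beta_t)c(T)$.

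Summing these bounds over $i$ and swapping the order of summation, the coefficient of $\alpha_j$ becomes $2(t-j)$ and that of $\beta_j$ becomes $(2t-2j-1)$ for $j \in [t-1]$, plus the $+(1-\beta_t)$ contribution from the middle customer. The crucial identity is $\sum_{j=1}^{t-1}(t-j)\alpha_j = t\alpha - \sum_{i=1}^{m}i\alpha_i$, which follows from $(t-t)\alpha_t = 0$ together with $\alpha_j = 0$ for $j > t$. Bounding the $\beta$-sum by its largest coefficient gives $\sum_{j=1}^{t-1}(2t-2j-1)\beta_j \leq (2t-3)(1-\alpha-\beta_t)$, and after collecting terms I expect everything to consolidate into
\[ S \leq \lrB{3\alpha - 2\sum_{i=1}^{m}i\alpha_i + 2(t-1)(1-\beta_t)}c(T). \]
The proof then concludes by invoking the capacity constraint: since $l$ is odd, $l = 2t-1 \leq k$ gives $2(t-1) = l-1 \leq k-1$, and $\beta_t \in [0,1]$ then implies $2(t-1)(1-\beta_t) \leq k-1$, yielding the target bound on $S$ and hence on $\delta(E^*_T)$.

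The hard part will be selecting the direction (forward vs.\ backward) of the triangle-inequality bound consistently within each pair so that all $a$- and $b$-weight pieces reassemble into full $\alpha_j\,c(T)$ and $\beta_j\,c(T)$ units without unpaired remainders; a mismatched choice for the ``far'' customer immediately breaks the symmetry and produces terms that cannot be controlled by the $\alpha_j,\beta_j$'s. It will also be important not to lose the tight slack $l-1 \leq k-1$ (which is used only once, at the very end) through overly coarse intermediate estimates, in particular when bounding the $\beta$-sum by its maximum coefficient.
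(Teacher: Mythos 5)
Your proposal is correct and follows essentially the same route as the paper's proof: the same mirror pairing of customer $i$ with customer $l+1-i$ via forward/backward path bounds (which is exactly how the $\alpha_j,\beta_j$ pieces reassemble), the same $(1-\beta_t)c(T)$ bound for the middle customer, and the same final use of $l\leq k$. The only differences are cosmetic --- you work with the radial sum $S$ and halve at the end, and you bound the $\beta$-sum by its largest coefficient $2t-3$ where the paper instead keeps $\sum_j 2j\beta_j$ and uses $2j\beta_j\geq 2\beta_j$ --- and both yield the identical bound.
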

\begin{proof}[Claim Proof]
Recall that $\delta(E^*_T)=\sum_{(x,y)\in E^*_T}\frac{c(v_0,x)+c(x,y)+c(v_0,y)}{2}$. Since $E^*_T=\{(v_{2i-1},v_{2i})\mid i\in[2t-1]\}$, we have 
\begin{equation}\label{eq4}
2\delta(E^*_T)=c(E^*_T)+\sum_{i=1}^{2t-1}[{c(v_0, v_{2i-1})+c(v_0, v_{2i})}].%=c(E^*_T)+\sum_{i=1}^{2t-1}\lrA{c(v_0,v_i)+c(v_0,v_{2l+1-i})}.
\end{equation}

Since $E^*_T=\{(v_{2i-1},v_{2i}), (v_{2l+1-2i},v_{2l+2-2i})\mid i\in[t-1]\}\cup\{(v_{2t-1},v_{2t})\}$, we have
\begin{equation}\label{eq5}
\begin{split}
&\sum_{i=1}^{2t-1}\lrB{c(v_0, v_{2i-1})+c(v_0, v_{2i})}\\
&=c(v_0,v_{2t-1})+c(v_0,v_{2t})+\sum_{i=1}^{t-1}\lrB{c(v_0, v_{2i-1})+c(v_0, v_{2i})+c(v_0,v_{2l+1-2i})+c(v_0,v_{2l+2-2i})}\\
&\leq c(v_0,v_{2t-1})+c(v_0,v_{2t})+\sum_{i=1}^{t-1}\lrB{2c(v_0, v_{2i-1})+2c(v_0,v_{2l+2-2i})+\beta_ic(T)},
\end{split}
\end{equation}
where the inequality follows from $c(v_0, v_{2i})\leq c(v_0, v_{2i-1})+c(v_{2i-1},v_{2i})$ and $c(v_0,v_{2l+1-2i})\leq c(v_0,v_{2l+2-2i})+c(v_{2l+1-2i},v_{2l+2-2i})$ by the triangle inequality and $\beta_ic(T)=\frac{c(v_{2i-1},v_{2i})+c(v_{2l+1-2i},v_{2l+2-2i})}{c(T)}$ when $i\in[t-1]$.

For any $i\in[t]$, by the triangle inequality, we have
\begin{equation}\label{eq6}
c(v_0, v_{2i-1})\leq \sum_{j=1}^{2i-1}c(v_{j-1},v_j)\quad\text{and}\quad c(v_0,v_{2l+2-2i})\leq \sum_{j=1}^{2i-1}c(v_{2l+1-j},v_{(2l+2-j)\bmod (2l+1)}).
\end{equation}

Then, for any $i\in[t]$, we have
\begin{equation}\label{eq7}
\begin{split}
c(v_0, v_{2i-1})+c(v_0,v_{2l+2-2i})&\leq\sum_{j=1}^{2i-1}[c(v_{j-1},v_j)+c(v_{2l+1-j},v_{(2l+2-j)\bmod (2l+1)})]\\
&=\lrA{\sum_{j=1}^{i}\alpha_j+\sum_{j=1}^{i-1}\beta_j}c(T)\\
&=\lrA{\sum_{j=1}^{i}(\alpha_j+\beta_j)-\beta_i}c(T),
\end{split}    
\end{equation}
where the first inequality follows from (\ref{eq6}), and the first equality from the definitions of $\alpha_i$ and $\beta_i$.

Then, we have
\begin{equation}\label{eq8}
\begin{split}
&c(v_0,v_{2t-1})+c(v_0,v_{2t})+\sum_{i=1}^{t-1}\lrB{2c(v_0, v_{2i-1})+2c(v_0,v_{2l+2-2i})+\beta_ic(T)}\\
&\leq \lrA{\sum_{j=1}^{t}(\alpha_j+\beta_j)-\beta_t}c(T)+\sum_{i=1}^{t-1}\lrA{\sum_{j=1}^{i}2(\alpha_j+\beta_j)-\beta_i}c(T)\\
&=\lrA{\sum_{j=1}^{t}\alpha_j+\sum_{i=1}^{t-1}\sum_{j=1}^{i}2(\alpha_j+\beta_j)}c(T)\\
&=\lrA{\sum_{j=1}^{t}\alpha_j+\sum_{j=1}^{t}2(t-j)(\alpha_j+\beta_j)}c(T)\\
&=\lrA{\sum_{j=1}^{t}\alpha_j+2t-\sum_{j=1}^{t}2j(\alpha_j+\beta_j)}c(T),
\end{split}
\end{equation}
where the inequality follows from (\ref{eq7}) and the last equality from $\sum_{j=1}^{t}(\alpha_j+\beta_j)=1$.

Therefore, by (\ref{eq4}), (\ref{eq5}), and (\ref{eq8}), we obtain that
\begin{equation}\label{eq9}
\begin{split}
2\delta(E^*_T)&=c(E^*_T)+\sum_{i=1}^{2t-1}\lrA{c(v_0,v_i)+c(v_0,v_{2l+1-i})}\\
&\leq\lrA{\sum_{j=1}^{t}(\alpha_j+\beta_j)+2t-\sum_{j=1}^{t}2j(\alpha_j+\beta_j)}c(T)\\
&=\lrA{l+2-\sum_{j=1}^{t}2j(\alpha_j+\beta_j)}c(T)\\
&\leq\lrA{k+2-\sum_{j=1}^{t}2j\alpha_j-\sum_{j=1}^{t}2\beta_j}c(T)\\
&=\lrA{k+\sum_{j=1}^{t}2\alpha_j-\sum_{j=1}^{t}2j\alpha_j}c(T)\\
&=\lrA{k+2\alpha-\sum_{j=1}^{t}2j\alpha_j}c(T).
\end{split}
\end{equation}
where the first equality follows from (\ref{eq4}), the second equality follows from $\sum_{j=1}^{t}(\alpha_j+\beta_j)=1$ and $t=\frac{l+1}{2}$, the last equality follows from $\alpha=\sum_{j=1}^{t}\alpha_j$, the first inequality follows from (\ref{eq5}) and (\ref{eq8}) and $c(E^*_T)=\sum_{j=1}^{t}\beta_jc(T)$, and the second inequality follows from $l\leq k$ and $\beta_j\geq0$ for all $j\in[t]$.

Recall that $\alpha_j=0$ for $t<j\leq m$. Then, by (\ref{eq9}), we obtain $\delta(E^*_T)\leq\lrA{\frac{\capacity}{2}+\alpha-\sum_{i=1}^{m}i\alpha_i}c(T)$.
\end{proof}

By Claims~\ref{claim1} and \ref{claim2}, the properties (1) and (2) in Lemma~\ref{thebounds} are both satisfied when $\size{E^*_T}$ is odd. 

Therefore, Lemma~\ref{thebounds} holds when $\size{E^*_T}$ is odd. 

Next, we consider the case that $\size{E^*_T}$ is even.

\textbf{Case~2: $\size{E^*_T}$ is even.}
In this case, we have $t=\ceil{\frac{l+1}{2}}=\frac{l+2}{2}$.
We let
\[
\alpha_{i}=\frac{c(v_{2i-2},v_{2i-1})+c(v_{2l+2-2i},v_{(2l+3-2i)\bmod (2l+1)})}{c(T)},\quad\forall i\in[t-1],
\]
and
\[
\beta_{i}=\frac{c(v_{2i-1},v_{2i})+c(v_{2l+1-2i},v_{2l+2-2i})}{c(T)},\quad\forall i\in [t-1].
\]
Also, we let $\alpha_t=\frac{c(v_{2t-2},v_{2t-1})}{c(T)}$ and $\beta_t=0$. Then, we have $\alpha_i,\beta_i\geq0$ for all $i\in[t]$.

Similarly, $c(E(T)\setminus E^*_T)=\sum_{i=1}^t\alpha_ic(T)$ and $c(E^*_T)=\sum_{i=1}^t\beta_ic(T)$.
Then, $\sum_{i=1}^{t}(\alpha_i+\beta_i)=1$. 
Since $\alpha_i=0$ for any $t<i\leq m$, we have $\alpha=\sum^m_{i=1}\alpha_i=\sum_{i=1}^{t}\alpha_i=1-\sum_{i=1}^{t}\beta_i\leq 1$. 

Therefore, the properties (3) and (4) in Lemma~\ref{thebounds} are both satisfied.

Next, we prove the properties (1) and (2) in Lemma~\ref{thebounds}.

\begin{claim}\label{claim3}
When $\size{E^*_T}$ is even, it holds that $\MST_T\leq\lrA{1-\max_{\substack{1\leq i\leq m}}\frac{1}{2}\alpha_{i}}c(T)$.
\end{claim}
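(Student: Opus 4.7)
The plan is to mimic the proof of Claim~\ref{claim1} (the odd case), with the only substantive difference being how the parity of $l$ changes the pairing of non-customer edges. The starting observation is the same: deleting any edge $e \in E(T) \setminus E^*_T$ from the cycle $T$ leaves a spanning tree of $G[V_T]$ that still contains every edge of $E^*_T$, so
\[
\MST_T \;\leq\; c(T) - \max_{e \in E(T)\setminus E^*_T} c(e).
\]
Hence the entire content of the claim reduces to showing that $\max_{e \in E(T)\setminus E^*_T} c(e) \geq \tfrac{1}{2}\max_{1\leq i\leq m}\alpha_i \cdot c(T)$.

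Next, I would explicitly list the non-customer edges: since $T=(v_0,v_1,\dots,v_{2l},v_0)$ and the customer edges are $(v_{2i-1},v_{2i})$ for $i\in[l]$, the set $E(T)\setminus E^*_T$ equals $\{(v_{2i-2},v_{(2i-1)\bmod(2l+1)}) : i\in[l+1]\}$, which has $l+1$ elements. When $l$ is even, $l+1 = 2t-1$ is odd, so these edges naturally pair up into $t-1$ symmetric pairs plus a single leftover middle edge. The $i$-th pair (for $i\in[t-1]$) consists of the two edges whose total cost is $\alpha_i c(T)$ by the very definition of $\alpha_i$ in Case~2, while the leftover is the single edge $(v_{2t-2},v_{2t-1})$, whose cost is exactly $\alpha_t c(T)$.

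To finish, for each pair $i\in[t-1]$ the maximum of its two edge costs is at least the average, namely at least $\tfrac{1}{2}\alpha_i c(T)$; and for $i=t$ the singleton has cost $\alpha_t c(T) \geq \tfrac{1}{2}\alpha_t c(T)$. Taking the maximum over all these lower bounds yields
\[
\max_{e\in E(T)\setminus E^*_T} c(e) \;\geq\; \max_{1\leq i\leq t}\frac{\alpha_i}{2}\,c(T) \;=\; \max_{1\leq i\leq m}\frac{\alpha_i}{2}\,c(T),
\]
where the last equality uses $\alpha_i=0$ for $t<i\leq m$. Substituting back gives the claimed inequality.

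The argument is essentially routine and presents no real obstacle; the only place one must be careful is the index bookkeeping, since in the even case one of the non-customer edges does not have a partner. That is handled cleanly by the definition $\alpha_t = c(v_{2t-2},v_{2t-1})/c(T)$ (with $\beta_t=0$) introduced in Case~2, which makes the trivial bound $\alpha_t \geq \alpha_t/2$ suffice for the leftover edge.
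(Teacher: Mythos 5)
Your proposal is correct and follows essentially the same route as the paper: the bound $\MST_T \leq c(T) - \max_{e\in E(T)\setminus E^*_T} c(e)$ obtained by deleting one non-customer edge from the cycle, followed by pairing the $l+1 = 2t-1$ non-customer edges symmetrically so that each pair has total cost $\alpha_i c(T)$ (hence its larger edge costs at least $\tfrac{1}{2}\alpha_i c(T)$), with the unpaired middle edge handled by $\alpha_t \geq \tfrac{1}{2}\alpha_t$. No gaps; the index bookkeeping matches the paper's definitions of $\alpha_i$ and $\alpha_t$ in Case~2.
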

\begin{proof}[Claim Proof]
Recall that $\MST_T$ measures the cost of a minimum-cost spanning tree in $G[V_T]$ containing all edges in $E^*_T$.
Deleting an arbitrary edge in $E(T)\setminus E^*_T$ from $T$ forms a spanning tree in $G[V_T]$ containing all edges in $E^*_T$. Then, we have 
\begin{align*}
\MST_T&\leq c(T)-\max_{e\in E(T)\setminus E^*_T}c(e)\\
&= c(T)-\max_{1\leq i\leq 2t-1}c(v_{2i-2},v_{(2i-1)\bmod (2l+1)})\\
&\leq c(T)-\max_{\substack{1\leq i\leq t-1}}\max\lrC{\frac{1}{2}(c(v_{2i-2},v_{2i-1})+c(v_{2l+2-2i},v_{(2l+3-2i)\bmod (2l+1)})),c(v_{2t-2},v_{2t-1})}\\
&=\lrA{1-\max_{\substack{1\leq i\leq t-1}}\max\lrC{\frac{1}{2}\alpha_{i},\alpha_t}}c(T)\\
&\leq \lrA{1-\max_{\substack{1\leq i\leq m}}\frac{1}{2}\alpha_{i}}c(T),
\end{align*}
where the first equality follows from $E(T)\setminus E^*_T=\{(v_{2i-2},v_{(2i-1)\bmod (2l+1)})\mid i\in[2t-1]\}$, the second inequality from $E(T)\setminus E^*_T=\{(v_{2i-2},v_{2i-1}),(v_{2i-2},v_{(2i-1)\bmod (2l+1)})\mid i\in[t-1]\}\cup\{(v_{2t-2},v_{2t-1})\}$, and the last inequality from the definition of $\alpha_i$.
\end{proof}

\begin{claim}\label{claim4}
When $\size{E^*_T}$ is even, it holds that $\delta(E^*_T)\leq\lrA{\frac{\capacity}{2}+\alpha-\sum_{i=1}^{m}i\alpha_i}c(T)$.
\end{claim}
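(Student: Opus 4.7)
My approach is to mirror the argument of Claim~\ref{claim2} almost verbatim, accounting for the pairing asymmetry in the even case. Here $l = \size{E^*_T} = 2t-2$ is even, so the $l$ edges of $E^*_T$ pair perfectly as $(v_{2i-1}, v_{2i})$ with $(v_{2l+1-2i}, v_{2l+2-2i})$ for $i \in [t-1]$, while the $l+1$ edges of $E(T)\setminus E^*_T$ form $t-1$ symmetric pairs plus an unpaired middle edge $(v_{2t-2}, v_{2t-1})$ whose weight is captured by $\alpha_t$ (with $\beta_t = 0$).

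Starting from $2\delta(E^*_T) = c(E^*_T) + \sum_{i=1}^{l}[c(v_0, v_{2i-1}) + c(v_0, v_{2i})]$, I would regroup the sum as $\sum_{i=1}^{t-1}[c(v_0, v_{2i-1}) + c(v_0, v_{2i}) + c(v_0, v_{2l+1-2i}) + c(v_0, v_{2l+2-2i})]$ using the pairing of $E^*_T$ (note the range now ends at $t-1$), and apply the triangle inequality exactly as in (\ref{eq5}) to upper bound this by $\sum_{i=1}^{t-1}[2c(v_0, v_{2i-1}) + 2c(v_0, v_{2l+2-2i}) + \beta_i c(T)]$. Next, for each $i \in [t-1]$, the prefix-walk bound (\ref{eq7}) still gives $c(v_0, v_{2i-1}) + c(v_0, v_{2l+2-2i}) \leq \lrA{\sum_{p=1}^{i}(\alpha_p + \beta_p) - \beta_i}c(T)$, because a prefix of length $2i-1 \leq 2t-3$ along $T$ (in either direction) never traverses the unpaired middle non-$E^*_T$ edge indexed by $p=t$, so exactly the paired portions of $\alpha_p, \beta_p$ for $p \leq i$ appear.

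Combining these bounds with $c(E^*_T) = \sum_{p=1}^{t-1}\beta_p c(T)$, swapping the order of the double summation, and using $\sum_{p=1}^{t}(\alpha_p + \beta_p) = 1$ (since $\beta_t = 0$, extending the outer sum from $p = t-1$ to $p = t$ is free because the coefficient $t-p$ vanishes at $p=t$) would yield $2\delta(E^*_T) \leq 2\lrA{t - \sum_{p=1}^{t} p(\alpha_p + \beta_p)} c(T)$. Since $2t = l+2 \leq k+2$ and $\sum_{p=1}^{t} p\beta_p \geq \sum_{p=1}^{t} \beta_p = 1 - \alpha$, the same final chain as in (\ref{eq9}) produces $\delta(E^*_T) \leq \lrA{\frac{k}{2} + \alpha - \sum_{i=1}^{m} i\alpha_i}c(T)$, using that $\alpha_i = 0$ for $i > t$.

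\textbf{Main obstacle.} The only real subtlety is verifying that the unpaired middle non-$E^*_T$ edge does not corrupt the telescoping arguments: concretely, that prefix walks of length up to $2t-3$ for $i \in [t-1]$ stay strictly inside the paired region, so (\ref{eq7}) transfers without modification, and that the convention $\beta_t = 0$ together with $\alpha_t \geq 0$ allows the final sum to be written over $p = 1, \ldots, t$ exactly as in the odd case. Once these are checked, the rest is the same combinatorial identity as before.
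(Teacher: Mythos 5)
Your proposal is correct and follows essentially the same route as the paper's proof: the same pairing of $E^*_T$ over $i\in[t-1]$, the same triangle-inequality regrouping, the same prefix-walk bound as in (\ref{eeq7}), and the same final telescoping with $\beta_t=0$, $2t=l+2\leq k+2$, and $\sum_{p}p\beta_p\geq\sum_p\beta_p=1-\alpha$. Your check that the unpaired middle edge (accounted for by $\alpha_t$) is never traversed by the prefix/suffix walks for $i\leq t-1$ is exactly the subtlety the paper's argument relies on.
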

\begin{proof}[Claim Proof]
Since $E^*_T=\{(v_{2i-1},v_{2i})\mid i\in[2t-2]\}$, we have 
\begin{equation}\label{eeq4}
2\delta(E^*_T)=c(E^*_T)+\sum_{i=1}^{2t-2}\lrB{c(v_0, v_{2i-1})+c(v_0, v_{2i})}.%=c(E^*_T)+\sum_{i=1}^{2t-2}\lrA{c(v_0,v_i)+c(v_0,v_{2l+1-i})}.
\end{equation}

Since $E^*_T=\{(v_{2i-1},v_{2i}), c(v_{2l+1-2i},v_{2l+2-2i})\mid i\in[t-1]\}$, we have
\begin{equation}\label{eeq5}
\begin{split}
\sum_{i=1}^{2t-2}\lrB{c(v_0, v_{2i-1})+c(v_0, v_{2i})}&=\sum_{i=1}^{t-1}\lrB{c(v_0, v_{2i-1})+c(v_0, v_{2i})+c(v_0,v_{2l+1-2i})+c(v_0,v_{2l+2-2i})}\\
&\leq\sum_{i=1}^{t-1}\lrB{2c(v_0, v_{2i-1})+2c(v_0,v_{2l+2-2i})+\beta_ic(T)},
\end{split}
\end{equation}
where the inequality follows from $c(v_0, v_{2i})\leq c(v_0, v_{2i-1})+c(v_{2i-1},v_{2i})$ and $c(v_0,v_{2l+1-2i})\leq c(v_0,v_{2l+2-2i})+c(v_{2l+1-2i},v_{2l+2-2i})$ by the triangle inequality and $\beta_ic(T)=\frac{c(v_{2i-1},v_{2i})+c(v_{2l+1-2i},v_{2l+2-2i})}{c(T)}$ when $i\in[t-1]$.

For any $i\in[t-1]$, by the triangle inequality, we have
\begin{equation}\label{eeq6}
c(v_0, v_{2i-1})\leq \sum_{j=1}^{2i-1}c(v_{j-1},v_j)\quad\text{and}\quad c(v_0,v_{2l+2-2i})\leq \sum_{j=1}^{2i-1}c(v_{2l+1-j},v_{(2l+2-j)\bmod (2l+1)}).
\end{equation}

Then, for any $i\in[t-1]$, we have
\begin{equation}\label{eeq7}
\begin{split}
c(v_0, v_{2i-1})+c(v_0,v_{2l+2-2i})&\leq\sum_{j=1}^{2i-1}[c(v_{j-1},v_j)+c(v_{2l+1-j},v_{(2l+2-j)\bmod (2l+1)})]\\
&=\lrA{\sum_{j=1}^{i}\alpha_j+\sum_{j=1}^{i-1}\beta_j}c(T)\\
&=\lrA{\sum_{j=1}^{i}(\alpha_j+\beta_j)-\beta_i}c(T),
\end{split}    
\end{equation}
where the first inequality follows from (\ref{eeq6}), and the first equality from the definitions of $\alpha_i$ and $\beta_i$.

Then, we have
\begin{equation}\label{eeq8}
\begin{split}
&\sum_{i=1}^{t-1}\lrB{2c(v_0, v_{2i-1})+2c(v_0,v_{2l+2-2i})+\beta_ic(T)}\\
&\leq \sum_{i=1}^{t-1}\lrA{\sum_{j=1}^{i}2(\alpha_j+\beta_j)-\beta_i}c(T)\\
&=\lrA{\sum_{i=1}^{t-1}\sum_{j=1}^{i}2(\alpha_j+\beta_j)-\sum_{j=1}^{t-1}\beta_j}c(T)\\
&=\lrA{\sum_{j=1}^{t}2(t-j)(\alpha_j+\beta_j)-\sum_{j=1}^{t-1}\beta_j}c(T)\\
&=\lrA{2t-\sum_{j=1}^{t}2j(\alpha_j+\beta_j)-\sum_{j=1}^{t}\beta_j}c(T),
\end{split}
\end{equation}
where the inequality follows from (\ref{eeq7}) and the last equality from $\sum_{j=1}^{t}(\alpha_j+\beta_j)=1$ and $\beta_t=0$.

Therefore, by (\ref{eeq4}), (\ref{eeq5}), and (\ref{eeq8}), we obtain that
\begin{equation}\label{eeq9}
\begin{split}
2\delta(E^*_T)&=c(E^*_T)+\sum_{i=1}^{2t-2}\lrB{c(v_0,v_i)+c(v_0,v_{2l+1-i})}\\
&\leq\lrA{2t-\sum_{j=1}^{t}2j(\alpha_j+\beta_j)}c(T)\\
&=\lrA{l+2-\sum_{j=1}^{t}2j(\alpha_j+\beta_j)}c(T)\\
&\leq\lrA{k+2-\sum_{j=1}^{t}2j\alpha_j-\sum_{j=1}^{t}2\beta_j}c(T)\\
&=\lrA{k+\sum_{j=1}^{t}2\alpha_j-\sum_{j=1}^{t}2j\alpha_j}c(T)\\
&=\lrA{k+2\alpha-\sum_{j=1}^{t}2j\alpha_j}c(T).
\end{split}
\end{equation}
where the first equality follows from (\ref{eeq4}), the second equality follows from $\sum_{j=1}^{t}(\alpha_j+\beta_j)=1$ and $t=\frac{l+2}{2}$, the last equality follows from $\alpha=\sum_{j=1}^{t}\alpha_j$, the first inequality follows from (\ref{eeq5}) and (\ref{eeq8}) and $c(E^*_T)=\sum_{j=1}^{t}\beta_jc(T)$, and the second inequality follows from $l\leq k$ and $\beta_j\geq0$ for all $j\in[t]$.

Recall that $\alpha_j=0$ for $t<j\leq m$. Then, by (\ref{eeq9}), we have $\delta(E^*_T)\leq\lrA{\frac{\capacity}{2}+\alpha-\sum_{i=1}^{m}i\alpha_i}c(T)$.
\end{proof}

By Claims~\ref{claim3} and \ref{claim4}, the properties (1) and (2) in Lemma~\ref{thebounds} are both satisfied when $\size{E^*_T}$ is even. 

Therefore, Lemma~\ref{thebounds} also holds when $\size{E^*_T}$ is odd. 
\end{proof}

\section{A Note on the Approximation Ratio}\label{onlyone}
In this section, we show that the approximation ratio in Theorem~\ref{mainres} can be achieved by using only the RPP tour $H_1$ in $G$. 

By Lemma~\ref{rpp2}, it suffices to prove the following lemma.
\begin{lemma}\label{rpp3}
It holds that $c(H_1)\leq\OPT+2\MST-2c(E^*)$.
\end{lemma}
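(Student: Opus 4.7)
The plan is to show that Frederickson's $\frac{3}{2}$-approximation for RPP, which outputs $H_1$, already satisfies the sharper bound in the statement, by reusing the perfect matching $M$ from Line~\ref{line2} of Algorithm~\ref{alg} as a witness for its analysis. Recall that Frederickson's algorithm applied to our instance (i) computes a minimum-cost edge set making $E^*\cup E'$ connected, which---since $E^*$ is a perfect matching on $V$ together with an isolated depot---is exactly the $E'$ appearing in the definition $\MST=c(E^*)+c(E')$; (ii) computes a minimum-cost perfect matching $M_1$ on the odd-degree vertex set $V_{\mathrm{odd}}$ of the multigraph $(V\cup\{v_0\},E^*\cup E')$; and (iii) shortcuts the Eulerian graph $E^*\cup E'\cup M_1$. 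Consequently $c(H_1)\le c(E^*)+c(E')+c(M_1)$, and my task reduces to bounding $c(M_1)\le c(M)+c(E')$.

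The key observation is that the multigraphs $(V\cup\{v_0\},E^*\cup E')$ and $(V\cup\{v_0\},M\cup E')$ have the same degree at every vertex, hence the same odd-degree set $V_{\mathrm{odd}}$. Indeed, both $E^*$ and $M$ are perfect matchings on $V$ and neither touches $v_0$, so for $v\in V$ the degree equals $1+\deg_{E'}(v)$ in both, and for $v_0$ it equals $\deg_{E'}(v_0)$ in both. Therefore $M\cup E'$ is itself a $V_{\mathrm{odd}}$-join.

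Since the cost function is metric, any $V_{\mathrm{odd}}$-join can be decomposed into edge-disjoint paths whose endpoints are precisely the vertices of $V_{\mathrm{odd}}$ (after discarding any closed subwalks, which only lowers cost), and shortcutting each such path to its direct edge yields a perfect matching on $V_{\mathrm{odd}}$ of no larger total cost. Hence $c(M_1)\le c(M\cup E')=c(M)+c(E')$. Combining this with the inequality $c(E^*)+c(M)\le\OPT$ already established inside the proof of Lemma~\ref{rpp2} gives
\[
c(H_1)\le c(E^*)+c(E')+c(M_1)\le\bigl(c(E^*)+c(M)\bigr)+2c(E')\le\OPT+2\bigl(\MST-c(E^*)\bigr),
\]
which is exactly the claimed bound. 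The main subtlety is the $V_{\mathrm{odd}}$-join-to-matching conversion, but this is a standard metric argument; the rest is a degree-parity bookkeeping that matches up Frederickson's analysis with the one already performed for $H_2$.
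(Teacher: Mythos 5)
Your proof is correct and follows essentially the same route as the paper: both take Frederickson's tree of cost $\MST$, exhibit $M\cup E'$ (the paper's $M\cup\overline{E}^*$) as a witness edge set with odd-degree set equal to that of the tree, use the metric shortcut argument to bound the Christofides-type matching by its cost, and finish with $c(E^*)+c(M)\le\OPT$ from the analysis of $H_2$. Your write-up merely makes the degree-parity bookkeeping and the join-to-matching conversion more explicit than the paper's phrasing ``$M^*$ augments $T^*$ into an Eulerian graph using minimum cost.''
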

\begin{proof}
We first recall the details of the $\frac{3}{2}$-approximation algorithm~\citep{eiselt1995arc}.

The algorithm begins by computing a minimum-cost spanning tree $T^*$ in $G$ such that $E^*\subseteq E(T^*)$. Then, it computes a minimum-cost perfect matching $M^*$ in $G[Odd(V(T^*))]$, where $Odd(V(T^*))$ denotes the set of odd-degree vertices in $T^*$.
Clearly, the graph $G^*=(V\cup\{v_0\}, E(T^*)\cup E(M^*))$ forms an Eulerian graph.
Then, it computes an Eulerian tour in $G^*$, which then is transformed into the RPP tour $H_1$ by shortcutting. 
Therefore, we have 
\[
c(H_1)\leq c(E(T^*)) + c(M^*).
\]

By definition, we have $c(E(T^*))\leq \MST$.
Now, we prove that $c(M^*)\leq \OPT+\MST-2c(E^*)$.

Let $\overline{E}^*\coloneq E(T^*)\setminus E^*$ and $M$ be a minimum-cost perfect matching in $G[V]$. 

It is clear that $M\cup E^*$ forms a graph where each vertex has an even degree.
Moreover, since $\overline{E}^*\cup E^*=E(T^*)$, we know that $M\cup E^*\cup \overline{E}^*$ forms a connected graph. Furthermore, $M\cup E^*\cup \overline{E}^*\cup \overline{E}^*$ forms an Eulerian graph.
Therefore, the set $M\cup \overline{E}^*$ augment the tree $T^*$ into an Eulerian graph.

Since the cost function $c$ is a metric function, we known that $M^*$ augments $T^*$ into an Eulerian graph using minimum-cost. Therefore, we have 
\[
c(M^*)\leq c(M) + c(\overline{E}^*) = c(M) + c(E(T^*)) - c(E^*).
\]

Recall that $c(E(T^*))\leq \MST$ by definition and $c(E^*)+c(M)\leq\OPT$ by (\ref{eq2}). Thus, we obtain 
\[
c(M^*)\leq \OPT+\MST-2c(E^*),
\]
as desired.
\end{proof}

\section{Conclusion}
In this paper, by extending the techniques in approximating CVRP~\citep{DBLP:conf/mfcs/zhao}, we propose a $(\frac{5}{2}-\Theta(\frac{1}{\sqrt{\capacity}}))$-approximation algorithm for equal-demand CARP. To our knowledge, this is the first improvement over the classic result of $\frac{5}{2}-\frac{1.5}{\capacity}$~\citep{jansen1993bounds}.
In the future, it would be interesting to investigate whether the methods in~\citep{blauth2022improving} can be adapted to equal-demand CARP to break the $\frac{5}{2}$-approximation barrier.

\bibliographystyle{apalike}
\bibliography{main}
\end{document}